\documentclass[10pt, onecolumn]{IEEEtran}
\usepackage{amsmath,amssymb,euscript,yfonts,psfrag,latexsym,graphicx}
\usepackage{bbm,color,amstext,wasysym,subfig,flushend,parskip,balance}
\graphicspath{{./},{./figures/}}

\newtheorem{thm}{Theorem}

\newtheorem{problem}[thm]{Problem}
\newtheorem{remark}[thm]{Remark}

\newtheorem{proof}{Proof}

%\newcommmand{\be}{\begin{equation}}
%\newcommand{\ee}{\end{equation}}

\newcommand{\mW}{{\mathbf W}}
\newcommand{\mmW}{{\mathbb W}}

  % this is for expectation -- we can change
\newcommand{\mE}{{\mathbb E}}
\newcommand{\mH}{{\mathbf H}}
\newcommand{\mmH}{{\mathbb H}}
\newcommand{\mmF}{{\mathbb F}}
\newcommand{\mmJ}{{\mathbb J}}

\newcommand{\mR}{{\mathbb R}}

\newcommand{\cL}{{\mathcal L}}
\newcommand{\cN}{{\mathcal N}}

\newcommand{\tr}{\operatorname{trace}}

  %{\operatorname{range}}
\newcommand{\trace}{\operatorname{tr}}

%\newcommand{\ignore}[1]{}

%\def\spacingset#1{\def\baselinestretch{#1}\small\normalsize}
%\setlength{\parskip}{10pt}
%\setlength{\parindent}{20pt}
%\spacingset{1}

%\newcommand{\mike}{\color{magenta}}
%\newcommand{\mmike}{\color{blue}}

\definecolor{grey}{rgb}{0.6,0.6,0.6}
\definecolor{lightgray}{rgb}{0.97,.99,0.99}

%\IEEEoverridecommandlockouts
%\overrideIEEEmargins

%\def\spacingset#1{\def\baselinestretch{#1}\small\normalsize}
\setlength{\parskip}{10pt}
\setlength{\parindent}{20pt}
%\spacingset{1}

\begin{document}
\title{Stochastic control and non-equilibrium\\
thermodynamics: fundamental limits}

%Non-Equilibrium Thermodynamics in quadratic potentials:
%Fundamental bounds\\
%on control and energy dissipation}

\author{Yongxin Chen, Tryphon Georgiou and Allen Tannenbaum
\thanks{Y.\ Chen is with the Department of Electrical and Computer Engineering,
Iowa State University, Ames, Iowa 50011; email: yongchen@iastate.edu}
\thanks{T. \ Georgiou is with the Department of Mechanical \& Aerospace Engineering,
University of Calfornia, Irvine, CA 92697-3975; email: tryphon@uci.edu}
\thanks{A.\ Tannenbaum is with the Departments of Computer Science and Applied Mathematics \& Statistics,
Stony Brook University, Stony Brook, NY 11794; email: allen.tannenbaum@stonybrook.edu}}

\maketitle

\begin{abstract}
We consider damped stochastic systems in a controlled (time-varying) quadratic potential and study their transition between specified Gibbs-equilibria states in finite time. By the second law of thermodynamics, the minimum amount of work needed to transition from one equilibrium state to another is the difference between the Helmholtz free energy of the two states and can only be achieved by a reversible (infinitely slow) process. The minimal gap between the work needed in a finite-time transition and the work during a reversible one, turns out to equal the square of the optimal mass transport (Wasserstein-2) distance between the two end-point distributions times the inverse of the duration needed for the transition. This result, in fact, relates non-equilibrium optimal control strategies (protocols) to gradient flows of entropy functionals via and the Jordan-Kinderlehrer-Otto scheme. The purpose of this paper is to introduce ideas and results from the emerging field of stochastic thermodynamics in the setting of classical regulator theory, and to draw connections and derive such fundamental relations from a control perspective in a multivariable setting.
%Non-equilibrium thermodynamical systems with harmonic Hamiltonian are considered. We investigate the problems of controlling such systems from one state to another via adjusting the Hamiltonian in finite time.  By the second law of thermodynamics, the minimum amount of work needed to drive a system from one equilibrium state to another is equal to the difference of Helmholtz free energy between the two states. The minimum can only be achieved when the process is reversible. This condition is clearly invalid for state transition in finite time. Herein, we explore the relation between these two and provide the exact value of their gap as well as a closed-form formula for the optimal control strategy. A remarkable result is that the minimum gap is equal to the optimal mass transport cost between the two distributions associated with the two states, divided by the length of the time window.
\end{abstract}

\section{Introduction}
The quest to quantify the efficiency of the steam engine during industrial revolution of the 19th century precipitated the development of thermodynamics. While its birth predates the atomic hypothesis, its modern day formulation makes mention of ``macroscopic'' systems that consist of  a huge number of ``microscopic'' particles (e.g., of the order of Avogadro's number), effectively modeled using probabilistic tools. Its goal is to describe transitions between admissible end-states of such macroscopic systems and to quantify energy and heat transfer between the systems and the ``heat bath'' that they may be in contact with. In spite of the name suggesting ``dynamics,'' the classical theory relied heavily on the concept of quasi-static transitions, i.e., transitions that are infinitely slow.
More realistic finite-time transitions has been the subject of ``non-equilibrium thermodynamics,'' a discipline that has not reached yet the same level of maturity, but one which is currently experiencing a rapid phase of new developments. Indeed, recent developments have launched a phase referred to as {\em stochastic thermodynamics} and {\em stochastic energetics} \cite{Jar97a,Jar97b,Cro99,carberry2004fluctuations,seifert2005entropy,SchSei07,jarzynski2007comparison,kawai2007dissipation,sekimoto2010stochastic,jarzynski2011equalities,aurell2012refined,Sei12}, that aims to quantify non-equilibrium thermodynamic transitions. The reader is referred to a nice and detailed review article \cite{Sei12} for an overview of this subject. Our goal in this paper is to develop such a framework, focusing on the stochastic control of linear uncertain systems in a quadratic (controlled) potential, in a way that is reminiscent of what is known as {\em covariance control} \cite{HotSke87,CheGeoPav14a,CheGeoPav14b,CheGeoPav17c}, and obtain simple derivation of fundamental bounds on the required control and dissipation in achieving relevant control objectives.

Specifically, we consider transitions of a thermodynamic system, represented by overdamped motion of particles in a (quadratic) potential, from one stationary stochastic state to another over a finite-time window $[0,t_f]$. The system is modeled by the (vector-valued) Ornstein-Uhlenbeck process
\begin{align}
dx(t)=-Q(t)x(t)dt +\sigma dw(t), \;\; x(0)=x_0,\label{eq:OU}
\end{align}
with $x\in\mR^n$ and $w$ a standard ($\mR^n$-vector-valued) Wiener process representing a thermal bath of temperature $T$; the parameter
\[
\sigma=\sqrt{2k_BT}.
\]
Here $k_B$ is the Boltzmann constant \cite{Jar97b},
the Hookean force field $-Q(t)x(t)$ is the gradient of a time-varying quadratic Hamiltonian
\begin{align}\label{eq:H}
\mH_t(x) = \mH(t,x)=\frac12 x^\prime Q(t) x,
\end{align}
and the controlled parameter $Q(t)=Q(t)^\prime$, $t\in[0,t_f]$, is scheduled so as to steer the system from a
specified initial distribution for $x_0$, to a final one for $x_{f}$, over the specified time window. The random variables $x_0,x_{f}$ are taken to be Gaussian with zero mean and covariances $\Sigma_0,\Sigma_f$, respectively. That is, the distributions of the state at the two end points have probability densities are $\rho_0 = \cN(0, \Sigma_0),~\rho_f = \cN(0,\Sigma_f)$, or more explicitly,
\[
\rho_i(x)=\frac{1}{(2\pi)^{n/2}|\Sigma_i|^{1/2}}e^{-\frac12 x'\Sigma_i^{-1}x},\;\;i\in\{0,f\},
\]
and we seek to determine the minimum amount of work needed to effect the transition.

From a controls perspective, our problem amounts to {\em covariance control} of {\em bilinear} systems. Indeed, the dynamics are driven by the product of the control input $Q(t)$ times the state $x(t)$. By adjusting the quadratic potential, it is possible to steer the system from one Gaussian distribution to another in finite time $t_f$. When this is the case, we are interested in the optimal control strategy ($Q(t)$, $t\in[0,t_f]$) that minimizes the required control energy.

As noted in the abstract, this minimum control energy is greater than the Helmholtz free energy difference $\Delta \mmF$ between the two states (second law of thermodynamics). Starting with the works by Jarzynski \cite{Jar97a,Jar97b} and Crooks \cite{Cro99}, great new insights began to shed light on the precise amount of work required for such finite-time transitions. Most famously,
the Jarzynski equality
\begin{equation} \label{Jar1}
e^{-\beta\Delta \mmF} =\mE \{e^{-\beta \mW}\},
\end{equation}
relates the {\em equilibrium quantity} $\Delta \mmF$ (free energy difference between equilibrium states) to an {\em averaged
non-equilibrium quantity} (exponential of the work; see our discussion below) over possible trajectories of the system in any finite-time transition. Throughout, $\mE\{\cdot\}$ denotes the expectation on the path space of system trajectories and
\[
\beta=(k_BT)^{-1},
\]
where again $T$ represents temperature of the heat bath and $k_B$ the Boltzmann constant; $\beta$ has units of ``inverse-work.''
The Jarzynski identity holds for arbitrary time-dependent driving force and not necessarily gradient of a quadratic potential.
This type of  result has led to a number of so-called {\em Fluctuation Theorems} in the literature, some of which have profound implications in biology and medicine \cite{Sei12,England,SanGeoTan15}.

Although the Jarzynski equality is quite remarkable, it doesn't provide an explicit gap between the free energy difference $\Delta\mmF$ and the average work $\mmW=\mE \{\mW\}$. This gap is essential if we would like to find an optimal strategy with minimum work to move a thermodynamical system from one state to another. Following up on the Jarzynski equality, the authors of \cite{SchSei07,Gomez2008} analyze the minimum energy control problems in the cases of a Brownian particle dragged by a harmonic optical trap through a viscous fluid, and of a Brownian particle subject to an optical trap with time dependent stiffness, in both overdamped and underdamped setting. Further, in \cite{AurMejMur11,aurell2012refined}, the authors provide an optimal solution that relates the work dissipation to a Wasserstein distance. It can be viewed as a stronger version of the Second Law of Thermodynamics for certain Langevin stochastic processes in finite-time.

The present work is closely related to both \cite{AurMejMur11,aurell2012refined} as well as \cite{SchSei07,Gomez2008}. Compared to \cite{AurMejMur11,aurell2012refined}, our approach gives a control-theoretic account to the fluctuation type results in the case for Gaussian distributions. In addition, we provide an alternative proof for general cases with connections to the gradient flows
 with respect to the Wasserstein geometry \cite{JorKinOtt98}.
 The major difference to \cite{SchSei07,Gomez2008} is that we consider the general matrix cases in this paper. We remark that the problems studied in \cite{AurMejMur11,aurell2012refined} and \cite{SchSei07,Gomez2008} are not equivalent. These two can be connected through an relaxation step as discussed in Section \ref{sec:relax}.
% The second case is similar to that treated here with the ``stiffness'' playing the role of work parameter -- $Q(t)$ in the present paper. One difference is that here  with vanishing mass.
% In particular, in some very nice work, Schmiedl and Seifert \cite{SchSei07} consider a scalar version of the system treated in the present paper (see Case II of the latter reference).  In addition to the vector-valued version, our main methodology is driven by optimal mass transport theory, which we believe gives a different perspective to the problem.

 The rest of the paper is organized as follows. In Section \ref{sec:pre} we go over some key concepts in stochastic thermodynamics and optimal mass transport. The minimum energy control problem between two zero-mean Gaussian distributions is formulated and solved in Section \ref{sec:regulation}. The results' implication in the second law of thermodynamics is discussed in Section \ref{sec:heat}. The result is extended to the nonzero mean setting in Section \ref{sec:nonzero}. A modification of our problem without terminal constraint on distributions is solved in Section \ref{sec:relax}. After that, in Section \ref{sec:JKO}, by leveraging the optimal mass transport theory, we solve the minimum energy control problem with general marginal distributions. Last, for comparison, we go over a simple proof of the Jarzynski equality in Section \ref{sec:jarzynski}. We conclude with several numerical examples in Section \ref{sec:eg}.

 \section{Preliminaries}\label{sec:pre}
 This work bridges stochastic control, stochastic thermodynamics and optimal mass transport. Below we introduce some key concepts in stochastic thermodynamics and optimal mass transport that are relevant.

 \subsection{Stochastic thermodynamics}
Stochastic thermodynamics \cite{Sei12,Owen84} is one approach to study thermodynamical systems via stochastic calculus. A basic model in this framework is
	\begin{equation}
	dx(t)=-\nabla \mH(t,x(t))dt +\sigma dw(t).
	\end{equation}
 Here $\mH$ is the Hamiltonian of the system and the noise $dw$ describes the effect of the heat bath. When the Hamiltonian is fixed, the state distribution converges to a Boltzmann distribution
	\[
		\rho_B(x) = \frac{1}{Z} e^{-\beta \mH(x)},
	\]
where $Z$ is a partition function. This is known as the equilibrium steady state.
We denote the internal energy and Helmholtz free energy in the equilibrium steady state by $\mmH$ and $\mmF$ respectively. They are defined by	\cite{Owen84}
	\[
		\mmH:= \mmH(\rho_B) := \int \mH(x) \rho_B(x) dx,
	\]
and
	\[
		\mmF :=\mmF(\mH) = -k_B T \log Z.
	\]
Clearly, they satisfy the relation
 	\begin{equation}\label{eq:free}
		\mmF = \mmH - TS(\rho_B)
	\end{equation}
with the entropy being
	\[
		S(\rho) = -k_B \int \rho(x) \log \rho(x) dx.
	\]

 The above relation \eqref{eq:free} may be used to extend the definition of free energy to non-equilibrium states. More precisely, let $\rho$ be the probability distribution of the state, then we can define the free energy through \cite{ParHorSag15}
	\begin{equation}\label{eq:freenon}
		\mmF(\rho;\mH) = \mmH(\rho) - TS(\rho).
	\end{equation}
Note that
	\[
		\mmF(\rho;\mH) \ge \mmF(\rho_B;\mH) = \mmF.
	\]
%{\rike	
%Gaussian cases, $\mmW \ge \Delta \mmF$.
%}

 \subsection{Optimal mass transport}
We only cover concepts that are related to the present work. We refer the reader to \cite{Vil03} for complete details.
Consider two measures $\rho_0, \rho_1$ on ${\mathbb R}^n$ with equal total mass. Without loss of generality, we take $\rho_0$ and $\rho_1$ to be probability distributions.
In the Kantorovich's formulation of optimal mass transport with quadratic cost, one seeks a joint distribution $\pi\in\Pi(\rho_0,\rho_1)$ on $\mR^n\times\mR^n$, referred to as ``coupling" of $\rho_0$ and $\rho_1$, that minimizes the total cost, and so that the marginals along the two coordinate directions coincide with $\rho_0$ and $\rho_1$, respectively, that is,
    \begin{equation}\label{eq:OptTrans}
        \inf_{\pi\in\Pi(\rho_0,\rho_1)}\int_{\mR^n\times\mR^n}\|x-y\|^2\pi(dxdy).
    \end{equation}

The above optimal transport problem has a surprising stochastic control formulation, which reads as
	\begin{subequations}\label{eq:stochcontrol}
	\begin{eqnarray}
	&& \inf_{u}~ \mE \left\{\int_0^1 \|u(t,x(t))\|^2 dt\right\}
	\\&& \dot x(t) = u(t,x(t))
	\\&& x(0) \sim \rho_0, ~~x(1) \sim \rho_1.
	\end{eqnarray}
	\end{subequations}
Briefly, we seek a feedback control strategy with minimum energy that drives the state of an integrator from an initial probability distribution $\rho_0$ to a terminal probability distribution $\rho_1$.
	
Both of the above problems have unique solutions under the assumption that the marginal distributions are absolutely continuous. The square root of the minimum of the cost (\eqref{eq:OptTrans} or \eqref{eq:stochcontrol}) defines a Riemannian metric on $P_2(\mR^n)$, the space of probability distributions on $\mR^n$ with finite second-order moments. This metric is known as the \emph{Wasserstein metric} $W_2$ \cite{JorKinOtt98,Otto,Vil03,Vil08}. On this Riemannian-type manifold, the geodesic curve connecting $\rho_0$ and $\rho_1$ is given by $\rho_t$, the probability density of $x(t)$ under the optimal control policy.
This is called \emph{displacement interpolation} \cite{McC97} and it satisfies
	\begin{equation}\label{eq:W2geodesic}
		W_2(\rho_s,\rho_t) = (t-s) W_2(\rho_0,\rho_1),\quad 0\le s< t\le 1.
	\end{equation}
	
When both of the marginals $\rho_0, \rho_1$ are Gaussian distributions, the problem has a closed-form solution \cite{Tak11,dowson1982frechet,jiang2012geometric}. Denote the mean and covariance of $\rho_i, i=0,1$ by $m_i$ and $\Sigma_i$, respectively. Let $X, Y$ be two Gaussian random vectors associated with $\rho_0, \rho_1$, respectively. Then the cost in (\ref{eq:OptTrans}) becomes
	\begin{equation}\label{eq:expectcost}
		\mE\{\|X-Y\|^2\} = \mE \{\|\tilde X-\tilde Y\|^2\} +\|m_0-m_1\|^2,
	\end{equation}
where $\tilde X = X-m_0, \tilde Y = Y-m_1$ are zero-mean versions of $X$ and $Y$. We minimize \eqref{eq:expectcost} over all the possible Gaussian joint distributions between $X$ and $Y$, which gives
	\begin{equation}\label{eq:OMTSDP}
	\min_S \left\{\|m_0-m_1\|^2+\tr(\Sigma_0+\Sigma_1-2S) ~\mid~
	\left[\begin{matrix}
	\Sigma_0 & S \\ S' & \Sigma_1
	\end{matrix}\right]\ge 0\right\},
	\end{equation}
with $S=\mE \{\tilde X \tilde Y'\}$. The constraint is a semidefinite one, so the above problem is one of semidefinite programming (SDP). The minimum is achieved in closed-form by the unique minimizer
	\begin{equation}\label{eq:S}
		S= \Sigma_0^{1/2}(\Sigma_0^{1/2}\Sigma_1\Sigma_0^{1/2})^{1/2}\Sigma_0^{-1/2}
	\end{equation}
corresponding to the minimum value
	\begin{equation}\label{eq:W2gaussian}
		W_2(\rho_0,\rho_1)^2 = \|m_0-m_1\|^2+\tr(\Sigma_0+
		\Sigma_1-2(\Sigma_0^{1/2}\Sigma_1\Sigma_0^{1/2})^{1/2}).
	\end{equation}
The resulting displacement interpolation $\rho_t$ is a Gaussian distribution with mean $m_t = (1-t)m_0+tm_1$ and covariance
	\begin{equation}\label{eq:disinterpG}
		\Sigma_t = \Sigma_0^{-1/2} \left((1-t)\Sigma_0+t(\Sigma_0^{1/2}\Sigma_1\Sigma_0^{1/2})^{1/2}
		 \right)^2 \Sigma_0^{-1/2}.
	\end{equation}

\section{Regulation via a time-varying potential} \label{sec:regulation}
%{\rike
%In this section, we describe key results relating non-equilibrium thermodynamics and quadratic control. A standard reference for the classical theory of thermodynamics and related concepts is \cite{Owen84}.
%}

We consider the stochastic dynamical system in \eqref{eq:OU}.
As mentioned earlier, it represents a thermodynamical system with a quadratic Hamiltonian \eqref{eq:H}, overdamped and attached to a heat bath that is modeled by the stochastic excitation $dw$. The initial state is a Gaussian random vector $x_0\sim {\mathcal N}(0,\Sigma_0)$, i.e., one having covariance $\Sigma_0$ and mean $\mE \{x_0\}=0$.
The initial distribution is usually taken to be the stationary distribution with potential remaining constant on $(-\infty, 0]$ by keeping $Q(t)\equiv Q_0$ over $t\in(-\infty, 0]$, in which case $Q_0=\frac{\sigma^2}{2}\Sigma_0^{-1}$, but this assumption is not required.
We are interested in steering the state to the terminal distribution
${\mathcal N}(0,\Sigma_f)$ through selecting an optimal (least energy) time-varying control matrix variable $Q(\cdot)=Q'(\cdot)$ satisfying the boundary conditions $Q(0) = Q_0, Q(t_f) = Q_f$.

The control energy (work) delivered to the system along any particular sample path $x(\cdot)$ by the time-varying potential \eqref{eq:H} is
\[
\mW(Q,x):=\int_0^{t_f}\frac{\partial \mH(t,x)}{\partial t} dt = \int_0^{t_f}\langle \dot Q(t), \frac{\partial \mH(t,x)}{\partial Q}\rangle dt,
\]
where $\langle X,Y\rangle = \trace(X'Y)$.
Thus, by averaging over all possible sample paths, we obtain
	\begin{eqnarray}\nonumber
	%W(Q(\cdot))
	\mmW &:=& \mE\{\mW(Q,x)\} = \mE \left\{\int_0^{t_f}\langle \dot Q, \frac{\partial \mH}{\partial Q}\rangle dt\right\}\\\nonumber
	&=&  \mE \left\{\int_0^{t_f} \frac{1}{2} \langle \dot Q(t), x(t)x(t)'\rangle dt\right\}\\\nonumber
	&=&  \frac{1}{2}\int_0^{t_f} \langle \dot Q(t), \Sigma(t)\rangle dt.
	\end{eqnarray}
%with $\mE\{\cdot\}$ denoting expectation on the space of sample paths.
Here, $\Sigma(\cdot)$ is the state covariance which, according to standard linear systems theory,
evolves according to the Lyapunov equation
	\begin{equation}\label{eq:lyap}
	\dot \Sigma(t) = -Q(t) \Sigma(t) - \Sigma(t) Q(t) + \sigma^2 I.
	\end{equation}
The control may be discontinuous, reflecting instantaneous changes in the Hamiltonian $\mH$, in which case, the expression for the work becomes the Lebesgue-Stieltjes integral
 	\begin{equation}\label{eq:workdisc}
	\mmW %(Q(\cdot))
	 = \frac{1}{2}\int_{0^-}^{t_f^+} \langle dQ(t), \Sigma(t)\rangle,
	\end{equation}
where $0^-,t_f^+$ represent limits from below and above, respectively, so as to account for the discontinuities.

%Taking into account possible jumps for the control parameter $Q$ at the two end points, and assuming differentiability in the open interval $(0,t_f)$, we integrate by parts the expression for the work to obtain
%\begin{align*}
%\frac{1}{2}\int_{0^-}^{t_f^+} \tr \left(\Sigma(t) dQ(t)\right) &=
%\frac12 \tr \left((Q(0^+)-Q(0^-))\Sigma_0\right)\\
%&\phantom{= } +\frac12 \tr \left((Q(t_f^+)-Q(t_f^-))\Sigma_f\right)\\
%&\phantom{= } -\frac{1}{2}\int_{0^+}^{t_f^-} \tr \left(\dot\Sigma(t) Q(t)\right)dt\\
%&\phantom{= } +\frac12\tr \left(Q(t_f^-)\Sigma_f-Q(0^+)\Sigma_0\right).
%\end{align*}

\begin{problem}\label{pro:cov}
Determine a control law
\[
\{Q(t)\mid t\in[0,t_f]\}
\]
that minimizes \eqref{eq:workdisc} subject to \eqref{eq:lyap} and the boundary conditions $Q(0) = Q_0, Q(t_f) = Q_f, \Sigma(0)=\Sigma_0, \Sigma(t_f) = \Sigma_f$.
\end{problem}

\begin{thm}\label{thm:opt}
Problem \ref{pro:cov} has a unique minimizer $Q_{\rm opt}(\cdot)$
as follows:\\[-.3in]
\begin{itemize}
\item[(i)] If $\Sigma_0 = \Sigma_f$, then $\mmW_{\rm min}=0$ and
\begin{align*}
Q_{\rm opt}(t)&=\frac{\sigma^2}{2}\Sigma_0^{-1},\\
\Sigma(t)&=\Sigma_0, \mbox{ for all }t\in(0,t_f).
\end{align*}
\item[(ii)] If $\Sigma_0 \neq \Sigma_f$, then
\begin{equation}
\mmW_{\rm min}=-\frac{\sigma^2}{4}\tr\log(\Sigma_f\Sigma_0^{-1})+\frac{1}{t_f} \tr(\Sigma_0+\Sigma_f-2(\Sigma_0^{1/2}\Sigma_f\Sigma_0^{1/2})^{1/2})
\end{equation}
and
\begin{subequations}
\begin{align}\label{eq:Q}
Q_{\rm opt}(t)&=\frac{\sigma^2}{2}\Sigma(t)^{-1}-(\Lambda(0)^{-1}+tI)^{-1}\\
\Sigma(t)&=(\Lambda(0)^{-1}+tI)M^{-1}(\Lambda(0)^{-1}+tI),
\label{eq:optSigma}
\end{align}
\end{subequations}
	with
\begin{subequations}\label{eq:parameters}
\begin{align}
\Lambda(0)&=\frac{1}{t_f} (-I + \Sigma_0^{-1/2}(\Sigma_0^{1/2}\Sigma_f\Sigma_0^{1/2})^{1/2}\Sigma_0^{-1/2})\\
M&=\Lambda(0)^{-1}\Sigma_0^{-1}\Lambda(0)^{-1}.
\end{align}
\end{subequations}
\end{itemize}
\end{thm}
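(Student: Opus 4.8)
The plan is to convert Problem~\ref{pro:cov}, which is posed over control matrices $Q(\cdot)$, into an optimal-transport problem over covariance paths, and then read off the minimizer from the Gaussian displacement interpolation of Section~\ref{sec:pre}. Since $\Sigma(t)\succ 0$ along any admissible trajectory, the Lyapunov constraint \eqref{eq:lyap} is uniquely solvable for $Q(t)$ given $\Sigma(t)$ and $\dot\Sigma(t)$; I would therefore treat the covariance path $\Sigma(\cdot)$ joining $\Sigma_0$ to $\Sigma_f$ as the free variable and regard $Q$ as dependent. The decisive change of variables is
\[
A(t):=\tfrac{\sigma^2}{2}\Sigma(t)^{-1}-Q(t),
\]
the drift of the probability current $v(t,x)=A(t)x$ of the associated Fokker--Planck flow. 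Substituting $Q=\tfrac{\sigma^2}{2}\Sigma^{-1}-A$ into \eqref{eq:lyap} collapses it to the transport (continuity) identity $\dot\Sigma=A\Sigma+\Sigma A$, so that $\Sigma(\cdot)$ is exactly the covariance carried by the deterministic flow $\dot x=Ax$.

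Next I would rewrite the work. Integrating \eqref{eq:workdisc} by parts as a Lebesgue--Stieltjes integral, with $\Sigma$ continuous and the prescribed endpoint values $Q(0^-)=Q_0$, $Q(t_f^+)=Q_f$ absorbing the jumps, gives
\[
\mmW=\tfrac12\big(\langle Q_f,\Sigma_f\rangle-\langle Q_0,\Sigma_0\rangle\big)-\tfrac12\int_0^{t_f}\langle Q,\dot\Sigma\rangle\,dt .
\]
Using $\dot\Sigma=A\Sigma+\Sigma A$ and the cyclic property of the trace, the integrand simplifies to $\langle Q,\dot\Sigma\rangle=\sigma^2\tr(A)-2\tr(A^2\Sigma)$, while $\tr(\Sigma^{-1}\dot\Sigma)=2\tr(A)$ identifies $\tr(A)=\tfrac12\tfrac{d}{dt}\log\det\Sigma$. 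Hence the $\tr(A)$ contribution integrates to $\tfrac{\sigma^2}{2}\tr\log(\Sigma_f\Sigma_0^{-1})$, and with the equilibrium boundary data $Q_i=\tfrac{\sigma^2}{2}\Sigma_i^{-1}$ (so that $\langle Q_i,\Sigma_i\rangle=\tfrac{\sigma^2}{2}n$ and the boundary term cancels) I obtain the clean decomposition
\[
\mmW=-\tfrac{\sigma^2}{4}\tr\log(\Sigma_f\Sigma_0^{-1})+\int_0^{t_f}\tr\!\big(A(t)^2\Sigma(t)\big)\,dt .
\]
The first term is fixed by the endpoints (the reversible, free-energy part $\Delta\mmF$), so minimizing $\mmW$ reduces to minimizing the second.

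The heart of the argument is that $\int_0^{t_f}\tr(A^2\Sigma)\,dt=\int_0^{t_f}\!\int\|v\|^2\rho\,dx\,dt$ is precisely the Benamou--Brenier kinetic-energy functional along the Gaussian path $\rho_t=\cN(0,\Sigma(t))$. Minimizing it over all paths from $\rho_0$ to $\rho_f$ in time $t_f$ yields, by Cauchy--Schwarz and time-reparametrization, the value $\tfrac1{t_f}W_2(\rho_0,\rho_f)^2$, attained uniquely by the constant-speed displacement interpolation; since the Wasserstein geodesic between Gaussians is itself Gaussian \eqref{eq:disinterpG}, the optimum is realized within the admissible linear-drift class, and substituting \eqref{eq:W2gaussian} gives the stated $\mmW_{\rm min}$. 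I would then translate the geodesic back to the control variable: the displacement interpolation has velocity $A(t)=(\Lambda(0)^{-1}+tI)^{-1}$, which satisfies the geodesic equation $\dot A=-A^2$ with $A(0)=\Lambda(0)$ as in \eqref{eq:parameters}, and covariance $\Sigma(t)=(\Lambda(0)^{-1}+tI)M^{-1}(\Lambda(0)^{-1}+tI)$; then $Q_{\rm opt}=\tfrac{\sigma^2}{2}\Sigma^{-1}-A$ reproduces \eqref{eq:Q}. Case (i) is the degenerate instance $\Sigma_0=\Sigma_f$, where the geodesic is constant, $W_2=0$ and $\tr\log(I)=0$.

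The main obstacle, in my view, is twofold and essentially technical rather than conceptual. First, the Lebesgue--Stieltjes integration by parts must be carried out rigorously so that the endpoint jumps of the optimal $Q$ (which does \emph{not} match the equilibrium values $Q_i$ at $0^+$ and $t_f^-$) are accounted for correctly and the boundary term reduces as claimed. Second, one must justify that restricting to covariance paths loses nothing relative to the full Benamou--Brenier problem---that is, that the Gaussian displacement interpolation is globally optimal---and then match it to the explicit $\Lambda(0)$ and $M$ in \eqref{eq:parameters}. The uniqueness asserted in the theorem then follows from uniqueness of the Wasserstein geodesic between distinct absolutely continuous measures.
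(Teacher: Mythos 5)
Your proposal is correct, and it reaches the closed-form solution by a genuinely different route than the paper. You share the same preparatory reduction---integration by parts on \eqref{eq:workdisc}, the substitution $\Lambda(t)=\tfrac{\sigma^2}{2}\Sigma(t)^{-1}-Q(t)$ turning \eqref{eq:lyap} into $\dot\Sigma=\Lambda\Sigma+\Sigma\Lambda$, and the splitting of $\mmW$ into the control-independent term $-\tfrac{\sigma^2}{4}\tr\log(\Sigma_f\Sigma_0^{-1})$ plus the kinetic term $\int_0^{t_f}\tr(\Lambda^2\Sigma)\,dt$---but from there the paper proceeds by direct calculus of variations: it convexifies via $X=\Lambda\Sigma$, derives the Euler--Lagrange equation $\dot\Lambda=-\Lambda^2$ from stationarity of a Lagrangian, solves it in closed form, and pins down $\Lambda(0)$ and $M$ through the matrix geometric mean $\Sigma_0^{-1}\sharp\Sigma_f$; the value \eqref{eq:Jmin} is computed directly, and the identification with $\tfrac{1}{t_f}W_2(\rho_0,\rho_f)^2$ appears only afterwards, in Section \ref{sec:heat}, as a corollary. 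You invert this logic: you recognize $\int\tr(\Lambda^2\Sigma)\,dt$ as the Benamou--Brenier action of the Gaussian path $\cN(0,\Sigma(t))$ with velocity field $\Lambda(t)x$, invoke the known minimum $\tfrac{1}{t_f}W_2(\rho_0,\rho_f)^2$ and its unique attainment by the displacement interpolation, check that the Gaussian geodesic \eqref{eq:disinterpG} has a symmetric linear velocity field and hence lies in the admissible class (this is the step that makes the restriction to covariance paths lossless), and then translate the geodesic back into \eqref{eq:Q}--\eqref{eq:parameters}. The paper's route is self-contained and constructive; yours is shorter, gets uniqueness for free from optimal transport theory, and establishes the paper's subsequent result of Section \ref{sec:heat} (that the optimal density flow is the displacement interpolation) simultaneously rather than as an afterthought. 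Two minor points to tighten: first, you assume equilibrium boundary data $Q_i=\tfrac{\sigma^2}{2}\Sigma_i^{-1}$ so that the boundary term cancels, whereas the paper carries the control-independent constant $\mmH_f(\rho_f)-\mmH_0(\rho_0)$ in \eqref{eq:key} and then drops it silently in the theorem statement; your explicit assumption matches the theorem as stated, and in the general case the constant merely shifts $\mmW_{\rm min}$ without affecting the minimizer, so nothing is lost. Second, like the paper, your formula $(\Lambda(0)^{-1}+tI)^{-1}$ presumes $\Lambda(0)$ nonsingular; the equivalent parametrization $\Lambda(t)=\Lambda(0)(I+t\Lambda(0))^{-1}$, which is what your geodesic computation actually produces, remains valid when $\Lambda(0)$ is singular and is worth keeping in that form.
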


\begin{proof}
Case (i) is trivial. We only discuss case (ii) in detail. Applying integration by parts to \eqref{eq:workdisc}, we obtain
\begin{equation}\label{eq:finalwork}
\mmW%(Q(\cdot))
=-\frac{1}{2}\int_{0^-}^{t_f^+} \langle Q(t), d\Sigma(t)\rangle + \frac12\tr \left(Q(t_f^+)\Sigma_f-Q(0^-)\Sigma_0\right).
\end{equation}
Notice that
	\[
	\frac12\tr \left(Q(t_f^+)\Sigma_f-Q(0^-)\Sigma_0\right)=\frac12\tr \left(Q_f\Sigma_f-Q_0\Sigma_0\right) =\mmH_f(\rho_f)-\mmH_0(\rho_0)
	\]
 is precisely the change in the average energy (expectation of the Hamiltonian) and is independent of the control $\{Q(t),~t\in[0,t_f]\}$.
More specifically,
	\[
		\mmH_f(\rho_f)-\mmH_0(\rho_0)= \int \mH_f(x) \rho_f(x) dx-\int \mH_0(x) \rho_0(x) dx.
	\]
	
Substituting \eqref{eq:lyap} into \eqref{eq:finalwork} yields
\begin{equation}\label{eq:Wnew}
\mmW=\frac12\int_{0}^{t_f} \left(2\tr(Q(t)\Sigma(t)Q(t))-\sigma^2\tr(Q(t))\right)dt +\mmH_f(\rho_f)-\mmH_0(\rho_0).
\end{equation}
We change variables, replacing $Q$ by
\[
\Lambda(t):=\frac{\sigma^2}{2}\Sigma(t)^{-1}-Q(t),
\]
in both, the
constraint \eqref{eq:lyap} as well as \eqref{eq:Wnew}. These now become
\begin{align}\label{eq:diffeq2}
\dot\Sigma(t) &= \Lambda(t)\Sigma(t)+\Sigma(t)\Lambda(t), \mbox{ and}\\
\mmW & = \int_{0}^{t_f} \tr(\Lambda(t)\Sigma(t)\Lambda(t)) dt - \frac{\sigma^2}{2} \int_{0}^{t_f} \tr(\Lambda(t))dt+\mmH_f(\rho_f)-\mmH_0(\rho_0),\label{eq:mmWagain}
\end{align}
respectively. From \eqref{eq:diffeq2},
\[
\tr(\dot \Sigma(t)\Sigma(t)^{-1})=2\tr(\Lambda(t)).
\]
It follows that
\begin{align*}
\frac{\sigma^2}{2} \int_{0}^{t_f} \tr(\Lambda(t))dt &=
\frac{\sigma^2}{4} \int_{0}^{t_f} \tr(\dot \Sigma(t)\Sigma(t)^{-1})dt\\
&=\frac{\sigma^2}{4} \int_{0}^{t_f} \tr(\frac{d}{dt}\log(\Sigma(t)))dt\\
&= \frac{\sigma^2}{4}\tr\log(\Sigma(t_f)\Sigma(0)^{-1})\\
&= \frac{\sigma^2}{4}\tr\log(\Sigma_f\Sigma_0^{-1})
\end{align*}
is independent of the choice of $Q$ or $\Lambda$.
Thus, minimization of \eqref{eq:Wnew} (equivalently, minimization of \eqref{eq:mmWagain}) is equivalent to minimization of
\begin{align}\label{eq:Wnewportion}
\mmJ:=\int_{0}^{t_f} \tr(\Lambda(t)\Sigma(t)\Lambda(t)) dt
\end{align}
subject to the choice of $\Lambda(\cdot)$ that satisfies \eqref{eq:diffeq2} and the boundary conditions $\Sigma(0)=\Sigma_0$ and $\Sigma(t_f)=\Sigma_f$.
Then,
\begin{equation} \label{eq:key}
\mmW=\mmJ-\frac{\sigma^2}{4}\tr\log(\Sigma_f\Sigma_0^{-1})+\mmH_f(\rho_f)-\mmH_0(\rho_0).
\end{equation}

Setting $X:=\Lambda \Sigma$, the functional $\mmJ$ becomes convex in $X,\Sigma$. Then,
\[
\min_\Lambda \mmJ = \min_X \int_{0}^{t_f}\tr(X(t)\Sigma(t)^{-1}X(t)^\prime)dt,
\]
subject to the linear constraint
\begin{align}\label{eq:diffeq3}
\dot\Sigma(t) &= X(t)+X(t)^\prime, \;\Sigma(0)=\Sigma_0,\,\Sigma(t_f)=\Sigma_f,
\end{align}
has a unique solution. In fact, a closed-form expression can be obtained by considering the necessary conditions that are being dictated by the stationarity of the Lagrangian
\begin{align*}
{\mathcal L}(\Sigma,X,\hat\Lambda):=\int_{0}^{t_f}\tr(X(t)\Sigma(t)^{-1}X(t)^\prime)dt\\+
\int_{0}^{t_f}\tr(\hat\Lambda(\dot\Sigma(t)- X(t)-X(t)^\prime))dt.
\end{align*}
Specifically, the first variation with respect to $X$ gives that
\[
\hat\Lambda=X\Sigma^{-1}=\Lambda.
\]
Then, the variation with respect to $\Sigma$ gives
	\begin{equation}
		\dot\Lambda = -\Lambda^2.
	\end{equation}
%Evidently, depending on the boundary conditions, if $\Lambda$ is to remain constant along specific directions, then along those direction it must vanish.
Assuming that $\Lambda(0)$ is nonsingular,
\[\Lambda(t) = (\Lambda(0)^{-1}+tI)^{-1}.
\]
%For notational convenience, we denote $a:=\Lambda(0)^{-1}$
From \eqref{eq:diffeq2},
	\[
		\Sigma(t) =
		(\Lambda(0)^{-1}+t I)M^{-1}(\Lambda(0)^{-1}+tI)
			\]
for a suitable choice of a matrix $M$. Then, $\Lambda(0),M$ are determined from the boundary conditions,
	\begin{eqnarray*}
		\Sigma(0) &=& \Lambda(0)^{-1}M^{-1}\Lambda(0)^{-1} = \Sigma_0,\\
		\Sigma(t_f) &=& (\Lambda(0)^{-1}+t_f I)M^{-1}(\Lambda(0)^{-1}+t_f I) = \Sigma_f.
	\end{eqnarray*}
It follows that
\[
\Sigma_0^{-1}= (I+\Lambda(0)t_f)\Sigma^{-1}_f(I+\Lambda(0)t_f),
\]
from which we deduce that $I+\Lambda(0)t_f$ is the geometric mean $(\Sigma_0^{-1}\sharp \Sigma_f)$  of $\Sigma_0^{-1}$ and $\Sigma_f$ (see \cite{bhatia2013matrix}),
viz.,
\[
I+\Lambda(0)t_f = \Sigma_0^{-1/2}(\Sigma_0^{1/2}\Sigma_f\Sigma_0^{1/2})^{1/2}\Sigma_0^{-1/2}.
\]
Thus, we conclude \eqref{eq:parameters}.

Finally, plugging the optimal solution into \eqref{eq:Wnewportion} yields
	\begin{eqnarray}
	\mmJ_{\rm min}&=&t_f \tr(M^{-1}) \nonumber
	\\&=& t_f \tr(\Lambda(0)\Sigma_0\Lambda(0)) \nonumber
	\\&=& \frac{1}{t_f} \tr(\Sigma_0+\Sigma_f-2(\Sigma_0^{1/2}\Sigma_f\Sigma_0^{1/2})^{1/2}), \label{eq:Jmin}
	\end{eqnarray}
which completes the proof. $\Box$
\end{proof}
\begin{remark}
The optimal control $Q(t)$ in \eqref{eq:Q} is continuous function on $(0,\,t_f)$. The limit values at $t= 0, t_f$ are
	\[
		Q(0^+) = \frac{\sigma^2}{2}\Sigma_0^{-1}
+\frac{1}{t_f}(I - \Sigma_0^{-1/2}(\Sigma_0^{1/2}\Sigma_f\Sigma_0^{1/2})^{1/2}\Sigma_0^{-1/2})
	\]
and
	 \[
	 	Q(t_f^-) = \frac{\sigma^2}{2}\Sigma_f^{-1}+\frac{1}{t_f}(-I + \Sigma_0^{1/2}(\Sigma_0^{1/2}\Sigma_f\Sigma_0^{1/2})^{-1/2}\Sigma_0^{1/2})
	\]
respectively. These may not be consistent with the boundary conditions $Q(0) = Q_0, Q(t_f)= Q_f$, which dictates the discontinuities of the optimal control at $t= 0, t_f$. When both the initial and terminal states are stationary, namely, $Q_0=\frac{\sigma^2}{2}\Sigma_0^{-1}, Q_f=\frac{\sigma^2}{2}\Sigma_f^{-1}$, such discontinuities go to zero as the length of time $t_f$ goes to infinity.
\end{remark}

%{\rike
%Consistent with our assumption on stationarity of the state process before $t=0$ and after $t=t_f$, we take
%\begin{subequations}\label{eq:stationarity}
%\begin{align}\label{eq:stationaritya}
%Q(t)&=Q_0=\frac{\sigma^2}{2}\Sigma_0^{-1} \mbox{ for } t\in (-\infty,0),\mbox{ and}\\
%Q(t)&=Q_f=\frac{\sigma^2}{2}\Sigma_f^{-1} \mbox{ for } t\in (t_f,\infty).\label{eq:stationarityb}
%\end{align}
%\end{subequations}
%This ensures that $x$ remains in the corresponding Gibbs distribution ${\mathcal N}(0,\Sigma_i)$, $i\in\{0,f\}$, in the respective infinite parts of the time axis.
%}

\section{Second law of thermodynamics and optimal transport} \label{sec:heat}
%{\rike REWRITE We now place the development of the previous section and relations to optimal mass transport in a more classical thermodynamical setting \cite{Owen84}, and prove a case of a result in \cite{aurell2012refined}, but in the vector-valued Gaussian case considered in this note.}

The problem to minimize $\mmJ$ in \eqref{eq:Wnewportion} is in fact a Monge-Kantorovich optimal transport problem with marginals $\rho_0$ and $\rho_f$, and
quadratic cost functional \cite{Vil03, Vil08}. Specifically,
\begin{equation} \label{eq:key1}
\min_\Lambda \mmJ = \frac{1}{t_f}W_2(\rho_0,\rho_f)^2.
\end{equation}
This follows directly from \eqref{eq:Jmin} and \eqref{eq:W2gaussian}.
Alternatively, consider the stochastic control formulation \eqref{eq:stochcontrol} of optimal transport.
%	\begin{align}\label{eq:MK}
%		\min_u\mE\left\{\int_0^1 \|u\|^2 dt\right\},
%	\end{align}
%subject to
%	\[
%		\dot x(t) = u(t,x(t)),
%	\]
%	and $x_i\sim{\mathcal N}(0,\Sigma_i)$, for $i\in\{0,1\}$.
The optimal solution, see e.g., \cite{CheGeoPav15b,CheGeoPav14e}, is in the linear state feedback form
 $u(t,x)=\Lambda(t)x$. With $\mE\{x(t)x(t)'\}=\Sigma(t)$,
 \[
 \mE\{ \|u\|^2\} = \tr(\Lambda(t)\Sigma(t)\Lambda(t)')
 \]
and
\[
	\dot\Sigma(t) = \Lambda(t)\Sigma(t)+\Sigma(t)\Lambda(t)'.
\]
The optimal $\Lambda$ is symmetric and therefore coincides with the minimizer of $\mmJ$ up to a scaling in time.
%The particular form for the optimizing solution $\Lambda_{\rm opt}(t)$ (and hence, of $Q_{\rm opt}$) can be obtained as follows:
%
%\rule{.2in}{.2in}\hspace{.2in}\rule{.2in}{.2in}\hspace{.2in}\rule{.2in}{.2in}\hspace{.2in}\rule{.2in}{.2in}\hspace{.2in}\\
The factor $1/{t_f}$ shows up due to the fact that the time window in standard optimal transport is $[0,\,1]$ while in our problem it is $[0,\,t_f]$.
Naturally, it follows from this equivalence that the probability density flow of $x(t)$ under optimal control $Q_{\rm opt}$ is a (scaled) geodesic (displacement interpolation) between $\rho_0$ and $\rho_f$ with respect to the Wasserstein metric $W_2$. Indeed, it can be verified that $\Sigma$ in \eqref{eq:optSigma} is
	\[
	\Sigma(t) = \Sigma_0^{-1/2} \left((1-\frac{t}{t_f})\Sigma_0+\frac{t}{t_f}(\Sigma_0^{1/2}\Sigma_f\Sigma_0^{1/2})^{1/2}
		 \right)^2 \Sigma_0^{-1/2},
	\]
which is consistent with the geodesic formula in \eqref{eq:disinterpG}. Thus, we obtain the following:
\begin{thm}
The probability density flow of $x(t)$ in Problem \ref{pro:cov} with optimal control $Q$ is the (scaled) displacement interpolation between $\rho_0$ and $\rho_f$.
\end{thm}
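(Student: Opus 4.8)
The plan is to exploit the fact that the entire problem lives in the Gaussian category, so that matching the flow of covariances is equivalent to matching the flow of densities. First I would observe that $x(t)$ solves the linear SDE \eqref{eq:OU}, driven by additive Gaussian noise from a zero-mean Gaussian initial condition; hence $\rho_t$ is, for every $t$, a zero-mean Gaussian, completely determined by its covariance $\Sigma(t)$ (the mean obeys $\dot m = -Q(t)m$, $m(0)=0$, so $m(t)\equiv 0$). Likewise, by \eqref{eq:disinterpG} the displacement interpolation between the two zero-mean Gaussians $\rho_0$ and $\rho_f$ is again a zero-mean Gaussian, determined by its covariance. Thus it suffices to verify that the optimal covariance $\Sigma(t)$ from \eqref{eq:optSigma}, with parameters \eqref{eq:parameters}, coincides with the geodesic covariance \eqref{eq:disinterpG} after the time reparametrization $t \mapsto t/t_f$ that accounts for the word ``scaled'' (cf.\ \eqref{eq:W2geodesic}).

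Next I would carry out this algebraic verification. Writing $G := I + t_f\Lambda(0) = \Sigma_0^{-1/2}(\Sigma_0^{1/2}\Sigma_f\Sigma_0^{1/2})^{1/2}\Sigma_0^{-1/2}$, the geometric-mean identity established in the proof of Theorem \ref{thm:opt}, and using $M^{-1} = \Lambda(0)\Sigma_0\Lambda(0)$, each factor $(\Lambda(0)^{-1}+tI)$ in \eqref{eq:optSigma} merges with the adjacent $\Lambda(0)$ via $(\Lambda(0)^{-1}+tI)\Lambda(0) = I + t\Lambda(0)$, giving
\[
\Sigma(t) = \left(I + t\Lambda(0)\right)\Sigma_0\left(I + t\Lambda(0)\right).
\]
Since $I + t\Lambda(0) = (1-\tfrac{t}{t_f})I + \tfrac{t}{t_f}G$, conjugating by $\Sigma_0^{1/2}$ and using $\Sigma_0^{-1/2}\Sigma_0\Sigma_0^{-1/2}=I$ collapses the product to $\Sigma_0^{-1/2}\bigl((1-\tfrac{t}{t_f})\Sigma_0 + \tfrac{t}{t_f}(\Sigma_0^{1/2}\Sigma_f\Sigma_0^{1/2})^{1/2}\bigr)^2\Sigma_0^{-1/2}$. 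This is exactly \eqref{eq:disinterpG} with $\Sigma_1=\Sigma_f$ and $t$ replaced by $t/t_f$, as anticipated in the display preceding the theorem.

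A complementary, more conceptual route I would at least mention invokes the equivalence \eqref{eq:key1} together with the stochastic-control formulation \eqref{eq:stochcontrol} of optimal transport. After the change of variables $\Lambda = \tfrac{\sigma^2}{2}\Sigma^{-1} - Q$ used in Theorem \ref{thm:opt}, minimizing the relevant portion $\mmJ$ of the work subject to \eqref{eq:diffeq2} is precisely the stochastic-control problem whose optimizer, by the displacement-interpolation characterization recalled in the Preliminaries, generates the $W_2$-geodesic between $\rho_0$ and $\rho_f$ up to the time scaling by $t_f$. Since the optimal feedback is the symmetric gain $\Lambda(t)=(\Lambda(0)^{-1}+tI)^{-1}$, the induced density flow is exactly this geodesic.

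I expect the main (though modest) obstacle to be the bookkeeping in the algebraic collapse: one must track that conjugation by $\Sigma_0^{1/2}$ turns $(1-s)I+sG$ into $\Sigma_0^{-1/2}((1-s)\Sigma_0+sP)\Sigma_0^{-1/2}$ with $P=(\Sigma_0^{1/2}\Sigma_f\Sigma_0^{1/2})^{1/2}$, so that the inner $\Sigma_0$ cancels and a clean square $((1-s)\Sigma_0+sP)^2$ emerges. The zero-mean hypothesis makes the mean component $(1-s)m_0+sm_f$ vanish, so no separate argument for the mean is needed, and Gaussianity of both flows upgrades the covariance identity to an identity of densities.
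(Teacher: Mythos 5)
Your proposal is correct and takes essentially the same route as the paper: the paper's own argument pairs the identification of $\mmJ$-minimization with the stochastic-control form of optimal transport (your ``conceptual'' route) with precisely the covariance check you perform, which the paper states only as ``it can be verified that'' $\Sigma(t)$ in \eqref{eq:optSigma} equals the scaled geodesic \eqref{eq:disinterpG}. Your additions---the explicit collapse of $(\Lambda(0)^{-1}+tI)M^{-1}(\Lambda(0)^{-1}+tI)$ to $\Sigma_0^{-1/2}\bigl((1-\tfrac{t}{t_f})\Sigma_0+\tfrac{t}{t_f}(\Sigma_0^{1/2}\Sigma_f\Sigma_0^{1/2})^{1/2}\bigr)^2\Sigma_0^{-1/2}$, and the observation that Gaussianity of the flow together with the vanishing mean upgrades the covariance identity to an identity of densities---are correct details that the paper leaves implicit.
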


From \eqref{eq:key1} and Theorem \ref{thm:opt}, the minimum of $\mmW$ is
\[
\frac{1}{{t_f}} W_2(\rho_0,\rho_f) ^2 - \frac{\sigma^2}{4}\tr\log(\Sigma_f\Sigma_0^{-1})+\mmH_f(\rho_f)-\mmH_0(\rho_0).
\]
Using the ``log det = trace log'' equality, and the fact that the entropy of Gaussian distributions is
	\[
		S(\rho) = -k_B\int \rho\log\rho= \frac{k_B}{2} \log\det(\Sigma)+\frac{k_B}{2} \log\det(2\pi I)+\frac{k_B}{2} \tr I,
	\]
and, in view of $\sigma^2= 2k_BT$, we get that the minimum value of $\mmW$ is
\begin{equation} \label{eq:min}
\mmW_{\min} =\frac{1}{{t_f}} W_2(\rho_0,\rho_f) ^2 - T S(\rho_f) + T S(\rho_0)+\mmH_f(\rho_f)-\mmH_0(\rho_0) = \frac{1}{{t_f}} W_2(\rho_0,\rho_f) ^2 - T \Delta S +\mmH_f(\rho_f)-\mmH_0(\rho_0).
\end{equation}
Next note that the change in the Helmholtz free energy (see \eqref{eq:freenon}) is
\[
\Delta\mmF=\mmF(\rho_f;\mH_f) - \mmF(\rho_0;\mH_0) =\mmH_f(\rho_f)-\mmH_0(\rho_0)-T\Delta S.
\]
Putting all this together, we get that
\begin{thm}\label{thm:WFnon}
	\begin{equation}\label{eq:WFnon}
	\mmW_{\min} = \Delta \mmF +\frac{1}{t_f} W_2(\rho_0,\rho_f) ^2.
	\end{equation}
\end{thm}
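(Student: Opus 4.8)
The plan is to obtain \eqref{eq:WFnon} by reinterpreting, term by term, the closed-form value of $\mmW_{\min}$ already furnished by Theorem \ref{thm:opt}(ii); no fresh optimization is required, since the substantive analytic work---solving Problem \ref{pro:cov} and identifying its reduced cost $\min_\Lambda\mmJ$ with a Wasserstein distance---is complete. That expression for $\mmW_{\min}$ carries two terms: a logarithmic one, $-\frac{\sigma^2}{4}\tr\log(\Sigma_f\Sigma_0^{-1})$, and a transport one, $\frac{1}{t_f}\tr(\Sigma_0+\Sigma_f-2(\Sigma_0^{1/2}\Sigma_f\Sigma_0^{1/2})^{1/2})$. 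The goal is to recognize the first as $-T\Delta S$ and the second as $\frac{1}{t_f}W_2(\rho_0,\rho_f)^2$, and then to absorb the entropy contribution together with the energy change into $\Delta\mmF$.

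For the transport term I would invoke \eqref{eq:key1} together with the Gaussian Wasserstein formula \eqref{eq:W2gaussian}: because both marginals are zero-mean, the mean-difference term $\|m_0-m_1\|^2$ in \eqref{eq:W2gaussian} vanishes, so $W_2(\rho_0,\rho_f)^2$ collapses to precisely the trace expression above and the second term equals $\frac{1}{t_f}W_2(\rho_0,\rho_f)^2$. For the logarithmic term I would substitute the Gaussian entropy $S(\rho)=\frac{k_B}{2}\log\det\Sigma+\mathrm{const}$, whose additive constant cancels in $\Delta S=S(\rho_f)-S(\rho_0)$, leaving $T\Delta S=\frac{k_BT}{2}\tr\log(\Sigma_f\Sigma_0^{-1})$; using $\sigma^2=2k_BT$ this is exactly $\frac{\sigma^2}{4}\tr\log(\Sigma_f\Sigma_0^{-1})$, so the logarithmic term is $-T\Delta S$. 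Together these substitutions reproduce \eqref{eq:min}.

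I would finish by appealing to the non-equilibrium free-energy definition \eqref{eq:freenon}: the surviving combination $\mmH_f(\rho_f)-\mmH_0(\rho_0)-T\Delta S$ is, by definition, $\Delta\mmF=\mmF(\rho_f;\mH_f)-\mmF(\rho_0;\mH_0)$, which delivers \eqref{eq:WFnon}. The only step demanding any care is the bookkeeping that links $\tr\log$ to $\log\det$ and tracks the factors $k_B$, $T$, and $\sigma$; conceptually the identification is immediate once the Gaussian entropy formula is in hand, so I anticipate no real obstacle at this stage---the mathematical content of the theorem resides upstream, in the optimal-transport characterization of $\min_\Lambda\mmJ$ already established.
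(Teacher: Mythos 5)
Your proposal is correct and follows essentially the same route as the paper's own derivation: identify the trace term with $\frac{1}{t_f}W_2(\rho_0,\rho_f)^2$ via \eqref{eq:W2gaussian}, identify the $-\frac{\sigma^2}{4}\tr\log(\Sigma_f\Sigma_0^{-1})$ term with $-T\Delta S$ via the Gaussian entropy formula and $\sigma^2=2k_BT$, and absorb the entropy and internal-energy changes into $\Delta\mmF$ through \eqref{eq:freenon}. One bookkeeping caution: the displayed formula in Theorem \ref{thm:opt}(ii) lists only the two terms you quote and omits $\mmH_f(\rho_f)-\mmH_0(\rho_0)$, yet your final step (like the paper's \eqref{eq:min}) requires that term---it vanishes only when both endpoint potentials are the equilibrium ones, $Q_0=\frac{\sigma^2}{2}\Sigma_0^{-1}$ and $Q_f=\frac{\sigma^2}{2}\Sigma_f^{-1}$---so you should invoke the identity \eqref{eq:key} from the proof of Theorem \ref{thm:opt}, where the energy change appears explicitly, rather than the theorem's stated two-term expression.
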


Recall that for reversible processes, one has
	\[
		\mmW = \Delta \mmF,
	\]
and for general processes
	\[
		\mmW \ge \Delta \mmF.
	\]
These are equivalent to the second law of thermodynamics, which says that the total entropy of an isolated system is nondecreasing. Theorem \ref{thm:WFnon} provides a stronger lower bound for entropy production of a finite-time process, and this bound connects thermodynamics and optimal mass transport!
	
The difference $\mmW- \Delta \mmF$ is the entropy production, or {\em work dissipation}, and denoted $\mmW_{\rm diss}$. This is the same as $ \mmJ$ in the proof of Theorem \ref{thm:opt}.
%Theorem \ref{thm:WFnon} shows that $\mmW_{\rm diss} \ge \frac{1}{t_f} W_2(\rho_0,\rho_f) ^2.$ This deep connection of optimal mass transport theory and thermodynamics, {\rike CHANGE already pointed out in \cite{aurell2012refined} in the scalar setting, is quite intriguing.}
%Finally, we remark that the above result doesn't require the assumption \eqref{eq:stationarity} that both of the marginal distributions are stationary. The same argument holds for any Gaussian marginal distributions. More precisely, the second term in \eqref{eq:finalwork} is in general nonzero, i.e., $\Delta\mmH\neq 0$, but the rest of the proof remains the same. Therefore, we conclude the following results.
Theorem \ref{thm:WFnon} provides a fundamental lower bound of work dissipation
	\[
		\mmW_{\rm diss}\ge \frac{1}{t_f} W_2(\rho_0,\rho_f) ^2
	\]
for a irreversible process evolving in a finite time-interval $[0,\,t_f]$. As we discussed earlier, this lower bound is achieved by the optimal protocol \eqref{eq:Q} and the corresponding  probability density flow is the displacement interpolation between $\rho_0$ and $\rho_f$.

In general, for any feasible protocol $Q(t), t\in[0,\,t_f]$, the work dissipation depends only on the probability density flow $\rho_t$ from $\rho_0$ to $\rho_f$.
\begin{thm}\label{thm:diss}
	\begin{subequations}\label{eq:worksubopt}
	\begin{eqnarray}
	\mmW_{\rm diss} &=& \int_0^{t_f} \tr(\Lambda(t)\Sigma(t)\Lambda(t)) dt \label{eq:worksubopt1}
	\\&&
	\dot \Sigma(t) = -Q(t) \Sigma(t) - \Sigma(t) Q(t) + \sigma^2 I \label{eq:worksubopt2}
	\\&& \Lambda(t)=\frac{\sigma^2}{2}\Sigma(t)^{-1}-Q(t). \label{eq:worksubopt3}
	\end{eqnarray}
	\end{subequations}
\end{thm}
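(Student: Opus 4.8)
The plan is to recognize that Theorem \ref{thm:diss} is, at bottom, a re-reading of identity \eqref{eq:key}, combined with the observation that the chain of manipulations leading to \eqref{eq:key} in the proof of Theorem \ref{thm:opt} never invoked optimality of the protocol. First I would revisit that derivation and confirm that each step is valid for \emph{any} admissible $Q(\cdot)$ that steers $\Sigma_0$ to $\Sigma_f$ under the Lyapunov dynamics \eqref{eq:worksubopt2}: the integration by parts in \eqref{eq:finalwork}, the substitution of \eqref{eq:lyap}, the change of variables $\Lambda=\frac{\sigma^2}{2}\Sigma^{-1}-Q$ of \eqref{eq:worksubopt3}, and the evaluation $\frac{\sigma^2}{2}\int_0^{t_f}\tr(\Lambda)\,dt=\frac{\sigma^2}{4}\tr\log(\Sigma_f\Sigma_0^{-1})$ all depend only on the boundary data $\Sigma_0,\Sigma_f$ and on the feasibility of the flow. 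Hence \eqref{eq:key} holds for every feasible protocol, with $\mmJ=\int_0^{t_f}\tr(\Lambda\Sigma\Lambda)\,dt$ as in \eqref{eq:Wnewportion}.

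Next I would insert the definition of work dissipation, $\mmW_{\rm diss}=\mmW-\Delta\mmF$, together with the non-equilibrium free-energy expression \eqref{eq:freenon}, which yields $\Delta\mmF=\mmH_f(\rho_f)-\mmH_0(\rho_0)-T\Delta S$. Subtracting this from \eqref{eq:key} cancels the internal-energy change $\mmH_f(\rho_f)-\mmH_0(\rho_0)$, leaving $\mmW_{\rm diss}=\mmJ-\frac{\sigma^2}{4}\tr\log(\Sigma_f\Sigma_0^{-1})+T\Delta S$.

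The final step is to verify that the entropy term cancels the residual log-determinant term. Using the Gaussian entropy $S(\rho)=\frac{k_B}{2}\log\det(\Sigma)+\mathrm{const}$ together with $\sigma^2=2k_BT$, one obtains $T\Delta S=\frac{\sigma^2}{4}\tr\log(\Sigma_f\Sigma_0^{-1})$, which matches exactly the subtracted term. Therefore $\mmW_{\rm diss}=\mmJ=\int_0^{t_f}\tr(\Lambda\Sigma\Lambda)\,dt$, i.e.\ \eqref{eq:worksubopt1}, with $\Sigma$ and $\Lambda$ governed by \eqref{eq:worksubopt2}--\eqref{eq:worksubopt3}. I expect no genuine obstacle beyond careful bookkeeping; the one point that must be made explicit is that \eqref{eq:key} is protocol-independent, since it is precisely this observation that upgrades the optimal-case identity of Theorem \ref{thm:opt} into a statement valid for an arbitrary feasible $Q$, and thereby establishes that the work dissipation is a functional of the density flow $\rho_t$ alone.
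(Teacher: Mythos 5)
Your proposal is correct and takes essentially the same route as the paper: the paper also reads Theorem \ref{thm:diss} off the protocol-independent identity \eqref{eq:key} from the proof of Theorem \ref{thm:opt} (noting explicitly that $\mmW_{\rm diss}=\mmW-\Delta\mmF$ ``is the same as $\mmJ$''), with the Gaussian entropy relation $T\Delta S=\frac{\sigma^2}{4}\tr\log(\Sigma_f\Sigma_0^{-1})$ cancelling the residual trace-log term. The only difference is one of presentation: the paper leaves these steps implicit in the surrounding text, whereas you spell them out, including the key observation that no step in deriving \eqref{eq:key} used optimality of $Q$.
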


Indeed, once the probability density flow $\rho_t$ is fixed, we can get $Q, \Lambda$ through \eqref{eq:worksubopt2}-\eqref{eq:worksubopt3} and then $\mmW_{\rm diss}$ through \eqref{eq:worksubopt1}. In fact, this is nothing but the length (scaled  by $t_f$) of the curve $\rho_t$ on the manifold of probability densities equipped with the Wasserstein metric $W_2$ \cite{Vil08}.

\begin{remark}
Minimizing the work $\mmW$ is equivalent to minimizing the work dissipation $\mmW_{\rm diss}=\mmW-\Delta \mmF$ as $\Delta\mmF$ relies only on the boundary conditions. When there is no constraint on the choice of Hamiltonian $\mH$, the optimal strategy is given by Theorem \ref{thm:opt}, which leads to a probability density flow that is the displacement interpolation between $\rho_0$ and $\rho_f$. On the other hand, when there exist constraints on $\mH$, in view of the above argument, we can lift the problem to the space of probability densities, and seek a feasible time-varying Hamiltonian such that the resulting density flow $\rho_t$ has minimum length on the manifold of probability densities equipped with the Wasserstein metric $W_2$. This may lead to a promising direction to solve constrained thermodynamical control problems.
\end{remark}

\section{Hamiltonian with nonzero center}\label{sec:nonzero}
In this section, we extend our framework to the cases when the centers of the Hamiltonian potentials are allowed to change over time. Specifically, consider the stochastic thermodynamical system
	\begin{equation}\label{eq:dyndrift}
		dx(t)=-Q(t)(x(t)-p(t))dt +\sigma dw(t),
	\end{equation}
which corresponds to the Hamiltonian
	\[
		\mH(t,x)=\frac12 (x-p(t))^\prime Q(t) (x-p(t))
	\]
with time-varying center $p(t)$.
Assume the initial and terminal Gaussian distributions are $\rho_0=\cN(m_0,\Sigma_0), \rho_f = \cN (m_f,\Sigma_f)$. Our goal is to drive the system from initial distribution $\rho_0$ to terminal distribution $\rho_f$ with minimum cost via changing the strength $Q$ as well as the center $p$ of the potential well at the same time.

The mean and covariance of $x(t)$ evolve according to
	\begin{subequations}\label{eq:Qp}
	\begin{equation}\label{eq:Qp1}
		\dot\Sigma(t) =-Q(t)\Sigma(t)-\Sigma(t)Q(t)+\sigma^2 I
	\end{equation}
and
	\begin{equation}\label{eq:Qp2}
		\dot m(t) = -Q(t) m(t)+Q(t)p(t).
	\end{equation}
	\end{subequations}
The average work is
	\begin{eqnarray*}
	\mmW &=& \mE \left\{\int_{0^-}^{t_f^+}\frac{\partial \mH(t,x)}{\partial t}\right\}\\
	&=&  \mE \left\{\int_{0^-}^{t_f^+} \frac{1}{2} \langle \dot Q(t), (x(t)-p(t))(x(t)-p(t))'\rangle -(x(t)-p(t))'Q(t)\dot p(t)dt\right\}\\
	&=&  \int_{0^-}^{t_f^+} [\frac{1}{2}\langle \dot Q(t), \Sigma(t)\rangle+\frac12 (m(t)-p(t))'\dot{Q}(t)(m(t)-p(t))- (m(t)-p(t))'Q(t)\dot p(t)]dt
	\\&=& \int_{0^-}^{t_f^+}  [\frac{1}{2}\langle dQ(t), \Sigma(t)\rangle+d(\frac{1}{2} (m(t)-p(t))'Q(t)(m(t)-p(t)))+\|\dot m(t)\|^2dt].
	\end{eqnarray*}
\begin{problem}
Find a time-varying $Q(t)$ from $Q(0) = Q_0$ to $Q(t_f) = Q_f$ and a time-varying $p(t)$ from $p(0)=p_0$ to $p(t_f)=p_f$ such that $x(t)$ has $\rho_0, \rho_f$ as the marginal distributions and the average work is minimized.
\end{problem} 	

\begin{thm}\label{thm:optQp}
The optimal $Q$ and $\Sigma$ are as in the zero-mean case, and the optimal $m, p$ satisfy
	\begin{equation}
		m(t) = \frac{t_f-t}{t_f}m_0+\frac{t}{t_f}m_f ,
	\end{equation}
and
	\begin{equation}\label{eq:optp}
		p(t) = \frac{t_f-t}{t_f}m_0+\frac{t}{t_f}m_f +\frac{1}{t_f}Q(t)^{-1}(m_f-m_0).
	\end{equation}
The corresponding work is
	\begin{eqnarray}\nonumber
		\mmW_{\rm min} &=& -\frac{\sigma^2}{4}\tr\log(\Sigma_f\Sigma_0^{-1})+\frac{1}{t_f} \tr(\Sigma_0
		+\Sigma_f-2(\Sigma_0^{1/2}\Sigma_f\Sigma_0^{1/2})^{1/2})+\frac{1}{t_f}\|m_f-m_0\|^2
		\\&&+\frac{1}{2} (m_f-p_f)'Q_f(m_f-p_f)-\frac{1}{2} (m_0-p_0)'Q_0(m_0-p_0).\label{eq:workminnon}
	\end{eqnarray}
\end{thm}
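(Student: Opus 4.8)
The plan is to exploit the three-way splitting of the work functional displayed immediately above the statement. Reading off that expression, $\mmW$ is the sum of (a) the covariance term $\frac{1}{2}\int_{0^-}^{t_f^+}\langle dQ(t),\Sigma(t)\rangle$, which depends only on the pair $(Q,\Sigma)$ through the Lyapunov dynamics \eqref{eq:Qp1}; (b) the exact differential $d\big(\frac{1}{2}(m-p)'Q(m-p)\big)$; and (c) the mean kinetic term $\int_0^{t_f}\|\dot m(t)\|^2\,dt$, which depends only on the mean trajectory $m(\cdot)$. Since \eqref{eq:Qp1} is identical to the Lyapunov equation of the zero-mean problem and term (a) is literally the zero-mean work functional \eqref{eq:workdisc}, I would simply invoke Theorem \ref{thm:opt}: the minimizing $Q$ and $\Sigma$ coincide with those of the zero-mean case, and the minimum of (a) equals the first two summands of \eqref{eq:workminnon}.

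Next, term (b) integrates at once to $\frac{1}{2}(m_f-p_f)'Q_f(m_f-p_f)-\frac{1}{2}(m_0-p_0)'Q_0(m_0-p_0)$, using only the prescribed boundary data; it is therefore a fixed constant, independent of the control in the interior, and supplies the last line of \eqref{eq:workminnon}. For term (c), minimizing $\int_0^{t_f}\|\dot m\|^2\,dt$ over paths from $m_0$ to $m_f$ is an elementary problem: by Cauchy-Schwarz, $\|m_f-m_0\|^2=\big\|\int_0^{t_f}\dot m\,dt\big\|^2\le t_f\int_0^{t_f}\|\dot m\|^2\,dt$, with equality iff $\dot m$ is constant. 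The minimizer is the straight line $m(t)=\frac{t_f-t}{t_f}m_0+\frac{t}{t_f}m_f$, with $\dot m\equiv (m_f-m_0)/t_f$ and minimum value $\frac{1}{t_f}\|m_f-m_0\|^2$, matching the remaining summand of \eqref{eq:workminnon}.

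The one point that requires care, and which I regard as the main obstacle, is justifying that the two minimizations — of (a) over $(Q,\Sigma)$ and of (c) over $m$ — are simultaneously achievable, since $Q$, $m$ and $p$ are linked through \eqref{eq:Qp2}. The resolution is that (c) involves neither $Q$ nor $p$, so once the optimal $m$ and the optimal $Q$ from Theorem \ref{thm:opt} are fixed, I can recover the center by solving \eqref{eq:Qp2}, namely $p(t)=m(t)+Q(t)^{-1}\dot m(t)$, which with the interpolating $m$ yields precisely \eqref{eq:optp}; this uses invertibility of $Q(t)$ on $(0,t_f)$, valid for the optimal protocol. The prescribed endpoint values $p_0,p_f$ (and $Q_0,Q_f$) need not agree with the interior limits of this formula; as in the zero-mean case (cf.\ the Remark following Theorem \ref{thm:opt}), $p$ and $Q$ may jump at $t=0$ and $t=t_f$, and these jumps are exactly what the boundary term (b) accounts for. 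Hence the lower bound furnished by the three independent contributions is attained, which establishes both \eqref{eq:optp} and \eqref{eq:workminnon}.
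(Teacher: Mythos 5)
Your proposal is correct and follows essentially the same route as the paper: the same three-way decomposition of $\mmW$, invoking Theorem \ref{thm:opt} for the decoupled $(Q,\Sigma)$ part, linear interpolation for the mean via the elementary $\int_0^{t_f}\|\dot m\|^2 dt$ minimization, and recovering $p$ from \eqref{eq:Qp2}. Your added care about simultaneous achievability of the two minimizations (and the invertibility of $Q(t)$) is a welcome refinement of a point the paper leaves implicit, but it does not change the argument.
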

\begin{proof}
We first simplify the work to
	\begin{eqnarray*}
	\mmW &=& \int_{0^-}^{t_f^+}  [\frac{1}{2}\langle dQ(t), \Sigma(t)\rangle+d(\frac{1}{2} (m(t)-p(t))'Q(t)(m(t)-p(t)))+\|\dot m(t)\|^2dt]
	\\&=& \int_{0^-}^{t_f^+} \frac{1}{2}\langle dQ(t), \Sigma(t)\rangle+\int_{0}^{t_f} \|\dot m(t)\|^2dt+\frac{1}{2} (m_f-p_f)'Q_f(m_f-p_f)
	-\frac{1}{2} (m_0-p_0)'Q_0(m_0-p_0).
	\end{eqnarray*}
The last two terms depend only on the boundary conditions. The first two terms are totally decoupled; one depends only on $Q, \Sigma$ while the other on $m$. Therefore, we can minimize these two terms independently. Clearly, the optimal $Q,\Sigma$ are identical to that in the zero-mean case (Theorem \ref{thm:opt}). To obtain $m, p$, we minimize $\int_0^{t_f}\|\dot m(t)\|^2dt$ subject to the boundary conditions $m(0)=m_0, m(t_f)=m_f$. Thus, the optimal $m$ is the linear interpolation between $m_0$ and $m_f$. Plugging it into \eqref{eq:Qp2} concludes the optimal $p$. $\Box$
\end{proof}

As we have already seen in Section \ref{sec:regulation}, the optimal strength $Q$ of the potential usually has discontinuities at the boundary points $t=0,\, t_f$. We next argue that similar phenomenon happens for the center $p$ of the potential. From \eqref{eq:optp} we get
	\begin{eqnarray*}
		p(0^+) &=& m_0+\frac{1}{t_f}Q(0^+)^{-1}(m_f-m_0),
		\\
		p(t_f^-) &=& m_f +\frac{1}{t_f}Q(t_f^-)^{-1}(m_f-m_0).
	\end{eqnarray*}
These usually don't match the boundary conditions $p(0)=p_0, p(t_f)=p_f$. When both the initial and terminal states are stationary, in which case $m_0 = p_0, m_f = p_f$, the discontinuity gaps at $t=0,\,t_f$  go to zero as $t_f$ goes to infinity.

Comparing \eqref{eq:workminnon} and \eqref{eq:W2gaussian} we again conclude the relation
	\begin{equation}\label{eq:OMTrelation}
		\mmW_{\rm min} = \Delta \mmF + \frac{1}{t_f} W_2(\rho_0,\rho_f) ^2.
	\end{equation}
Here we have employed the property that entropy is invariant with respect to translation. 	
Moreover, the resulting density flow $\rho_t$ is the (scaled) displacement interpolation between $\rho_0$ and $\rho_f$.

\section{Relaxation}\label{sec:relax}
In this section, we consider a modified version of Problem~\ref{pro:cov}. We specify a terminal value for the potential by fixing $Q_f$ while we relax the terminal constraint $\Sigma(t_f)=\Sigma_f$, which is commonly set to be $\frac{\sigma^2}{2}Q_f^{-1}$. This value for the covariance will be then attained asymptotically since, even if we do not specify the terminal distribution, it will converge to the Boltzmann distribution due to fluctuation-dissipation effects. Therefore, if our goal is to simply minimize the work, there is no need to insist on setting $\Sigma(t_f)=\Sigma_f$. More precisely, we address the following.
\begin{problem}\label{pro:covnew}
Find a function $Q(\cdot)$ from $Q(0)=Q_0$ to $Q(t_f)=Q_f$ over time $[0,t_f]$ that minimizes the average work \eqref{eq:workdisc} subject to constraint \eqref{eq:lyap} as well as the boundary condition $\Sigma(0)=\Sigma_0$.
\end{problem}

There are several possible approaches to solve the above problem. One of them is applying standard calculus of variations, just like what we did in the proof of Problem \ref{pro:cov}. Here, we adopt an alternative idea which solves the problem in two steps. We first find the solution for a given terminal value $\Sigma(t_f)=\Sigma_f$ and then minimize the cost function over all possible $\Sigma_f \ge 0$. Evidently, the first step is equivalent to solving Problem \ref{pro:cov}. The optimal cost is given by \eqref{eq:WFnon}, which is
	\[
		\mmF(\rho_f;\mH_f) - \mmF(\rho_0;\mH_0) +\frac{1}{t_f} W_2(\rho_0,\rho_f) ^2,
	\]
where $\rho_0, \rho_f$ are the zero-mean Gaussian distributions with covariances $\Sigma_0, \Sigma_{f}$. The free energy is
	\[
		\mmF(\rho_f;\mH_f) = \frac{1}{2}\tr(Q_f\Sigma_f)-\frac{\sigma^2}{4}\log\det\Sigma_f+{\rm constant}.
	\]
By \eqref{eq:OMTSDP}, the distance between two Gaussian distributions is given by the solution of the SDP
	\[
		\min_S~ \left\{\tr(\Sigma_0+\Sigma_f-2S) ~\mid~
		\left[\begin{matrix}
		\Sigma_0 & S \\ S' & \Sigma_f
		\end{matrix}\right]\ge 0\right\}.
	\]
Plugging them into the cost yields the convex optimization formulation of Problem \ref{pro:covnew}
	\begin{subequations}\label{eq:relax}
	\begin{eqnarray}
	&&\min_{S,\Sigma_f}~ \frac{1}{2}\tr(Q_f\Sigma_f)-\frac{\sigma^2}{4}\log\det\Sigma_f+\frac{1}{t_f}\tr(\Sigma_f-2S),
	\\
	&&\mbox{subject to} \quad \left[\begin{matrix}
		\Sigma_0 & S \\ S' & \Sigma_f
		\end{matrix}\right]\ge 0.
	\end{eqnarray}
	\end{subequations}
After solving \eqref{eq:relax}, we can obtain the solution of Problem \ref{pro:covnew} via that of Problem \ref{pro:cov} with the optimal $\Sigma_f$ as a boundary condition.
\begin{thm}\label{thm:solrelax}
The convex optimization problem \eqref{eq:relax} has a unique minimizer at
	\begin{subequations}
	\begin{eqnarray}
		\Sigma_f &=&  \frac{\sigma^2}{4}(\frac12Q_f+\frac{I}{t_f}+X)^{-1}
		\\
		S &=&  \Sigma_0^{1/2}(\Sigma_0^{1/2}\Sigma_f\Sigma_0^{1/2})^{1/2}\Sigma_0^{-1/2}
	\end{eqnarray}
	\end{subequations}
with
	\[
		X = \frac{2}{\sigma^2 t_f^2}\Sigma_0-\frac{2}{\sigma t_f} \Sigma_0^{1/2}\left(
		\Sigma_0^{-1/2}(\frac12Q_f+\frac{I}{t_f})\Sigma_0^{-1/2}+\frac{I}{\sigma^2t_f^2}\right)^{1/2}\Sigma_0^{1/2}.
	\]
\end{thm}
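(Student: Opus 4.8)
The plan is to solve \eqref{eq:relax} by a two-stage minimization: first eliminate $S$ for fixed $\Sigma_f$, then optimize the resulting function of $\Sigma_f$ alone. Since $S$ enters the objective only through $-\frac{2}{t_f}\tr(S)$, the inner problem is to maximize $\tr(S)$ subject to the semidefinite constraint. For $\Sigma_f\succ0$ the Schur complement reduces that constraint to $\Sigma_0-S\Sigma_f^{-1}S'\ge0$, so this is precisely the inner SDP \eqref{eq:OMTSDP} (with $\Sigma_1$ replaced by $\Sigma_f$), whose unique maximizer is \eqref{eq:S}, namely $S=\Sigma_0^{1/2}(\Sigma_0^{1/2}\Sigma_f\Sigma_0^{1/2})^{1/2}\Sigma_0^{-1/2}$ with optimal value $\tr(S)=\tr((\Sigma_0^{1/2}\Sigma_f\Sigma_0^{1/2})^{1/2})$ by cyclicity of the trace. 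This already establishes the stated formula for $S$ and leaves the reduced objective
\[
g(\Sigma_f):=\tfrac12\tr(Q_f\Sigma_f)-\tfrac{\sigma^2}{4}\log\det\Sigma_f+\tfrac1{t_f}\tr(\Sigma_f)-\tfrac2{t_f}\tr((\Sigma_0^{1/2}\Sigma_f\Sigma_0^{1/2})^{1/2}).
\]

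Next I would record that $g$ is strictly convex on the positive-definite cone: the $-\log\det$ term is strictly convex, the first and third terms are affine, and $\tr(M^{1/2})$ is concave in $M\succ0$, so its negative composed with the linear map $\Sigma_f\mapsto\Sigma_0^{1/2}\Sigma_f\Sigma_0^{1/2}$ is convex. Strict convexity gives at most one minimizer and reduces the problem to $\nabla g(\Sigma_f)=0$. Using $\nabla\log\det\Sigma_f=\Sigma_f^{-1}$ and the differential $d\,\tr((\Sigma_0^{1/2}\Sigma_f\Sigma_0^{1/2})^{1/2})=\tfrac12\tr(\Sigma_0^{1/2}(\Sigma_0^{1/2}\Sigma_f\Sigma_0^{1/2})^{-1/2}\Sigma_0^{1/2}\,d\Sigma_f)$, the stationarity condition becomes
\[
\tfrac{\sigma^2}{4}\Sigma_f^{-1}=\tfrac12Q_f+\tfrac{I}{t_f}-\tfrac1{t_f}\Sigma_0^{1/2}(\Sigma_0^{1/2}\Sigma_f\Sigma_0^{1/2})^{-1/2}\Sigma_0^{1/2}.
\]
Setting $X:=-\tfrac1{t_f}\Sigma_0^{1/2}(\Sigma_0^{1/2}\Sigma_f\Sigma_0^{1/2})^{-1/2}\Sigma_0^{1/2}$ immediately reproduces the claimed form $\Sigma_f=\tfrac{\sigma^2}{4}(\tfrac12Q_f+\tfrac{I}{t_f}+X)^{-1}$, so it only remains to put $X$ in closed form.

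The main obstacle is that this stationarity equation is implicit, with $\Sigma_f$ appearing on both sides through $\Sigma_0^{1/2}\Sigma_f\Sigma_0^{1/2}$. I would resolve this with the substitution $W:=(\Sigma_0^{1/2}\Sigma_f\Sigma_0^{1/2})^{-1/2}$, so that $\Sigma_f^{-1}=\Sigma_0^{1/2}W^2\Sigma_0^{1/2}$ and the second matrix factor above equals $\Sigma_0^{1/2}W\Sigma_0^{1/2}$. Conjugating the whole equation by $\Sigma_0^{-1/2}$ on both sides cancels the outer $\Sigma_0^{1/2}$ factors and leaves the clean matrix quadratic
\[
\tfrac{\sigma^2}{4}W^2+\tfrac1{t_f}W=R,\qquad R:=\Sigma_0^{-1/2}\Bigl(\tfrac12Q_f+\tfrac{I}{t_f}\Bigr)\Sigma_0^{-1/2}.
\]
Completing the square gives $(W+\tfrac{2}{\sigma^2t_f}I)^2=\tfrac4{\sigma^2}(R+\tfrac1{\sigma^2t_f^2}I)$, and the unique positive-definite square root yields $W=\tfrac2\sigma(R+\tfrac1{\sigma^2t_f^2}I)^{1/2}-\tfrac2{\sigma^2t_f}I$. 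Back-substituting into $X=-\tfrac1{t_f}\Sigma_0^{1/2}W\Sigma_0^{1/2}$ and expanding produces exactly the stated expression for $X$, while $W\succ0$ confirms this is the admissible branch (so $\Sigma_f\succ0$). Since $g$ is strictly convex, this stationary point is the unique global minimizer, which completes the proof.
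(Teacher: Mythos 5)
Your proof is correct, and it takes a genuinely different route from the paper's. The paper works on the dual side: it forms a Lagrangian for \eqref{eq:relax} with a negative-semidefinite block multiplier $\Psi$, obtains $\Psi_{12}=I/t_f$ and $\Sigma_f=\frac{\sigma^2}{4}(\frac12 Q_f+\frac{I}{t_f}+\Psi_{22})^{-1}$ from stationarity, eliminates $\Psi_{11}$ to reach the reduced dual problem \eqref{eq:Psi22} in $\Psi_{22}$ alone, solves its optimality condition (a matrix quadratic in $\Psi_{22}$) by completing the square, and verifies that the chosen root respects $\Psi_{22}\le 0$; uniqueness is argued via strong convexity of that dual problem. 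You instead stay entirely primal: you eliminate $S$ by recognizing the inner problem as exactly \eqref{eq:OMTSDP} with known unique solution \eqref{eq:S}, show the reduced objective $g(\Sigma_f)$ is strictly convex (using concavity of $M\mapsto\tr(M^{1/2})$), and solve $\nabla g=0$ through the substitution $W=(\Sigma_0^{1/2}\Sigma_f\Sigma_0^{1/2})^{-1/2}$, again completing a matrix square. The two computations are in fact equivalent under the hood: your $X=-\frac{1}{t_f}\Sigma_0^{1/2}W\Sigma_0^{1/2}$ is precisely the paper's optimal dual variable $\Psi_{22}$ (as it must be, by the KKT conditions), and your quadratic $\frac{\sigma^2}{4}W^2+\frac{1}{t_f}W=R$ becomes the paper's quadratic in $\Psi_{22}$ after the substitution $\Psi_{22}=-\frac{1}{t_f}\Sigma_0^{1/2}W\Sigma_0^{1/2}$. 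As for what each buys: the dual route requires only elementary derivatives ($\log\det$ and matrix inverses) but obliges one to track the multiplier's sign constraint and pass back to the primal; your primal route avoids duality altogether and gets uniqueness of the minimizer immediately from strict convexity, at the price of the less elementary facts that $\tr(M^{1/2})$ is concave on the positive-definite cone and that $d\,\tr(M^{1/2})=\frac12\tr(M^{-1/2}dM)$. A side benefit of your derivation is that it keeps all factors of $t_f$ explicit and lands exactly on the stated closed form for $X$; the paper's printed intermediate step loses a factor of $t_f^2$ when it substitutes $\Sigma_0^{-1/2}\Psi_{22}\Sigma_0^{-1/2}$ into the quadratic (which should read $\frac{\sigma^2 t_f^2}{4}(\Sigma_0^{-1/2}\Psi_{22}\Sigma_0^{-1/2})^2-\Sigma_0^{-1/2}\Psi_{22}\Sigma_0^{-1/2}=Y$), so your argument independently confirms that the formula in the theorem statement is the correct one.
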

\begin{proof}
First we construct a Lagrangian
	\[
		\cL (S, \Sigma_f, \Psi) = \frac{1}{2}\tr(Q_f\Sigma_f)-\frac{\sigma^2}{4}\log\det\Sigma_f+\frac{1}{t_f}\tr(\Sigma_f-2S)+
		\tr\left(\left[\begin{matrix}
		\Sigma_0 & S \\ S' & \Sigma_f
		\end{matrix}\right]\Psi\right)
	\]
with Lagrange multiplier
	\[
		\Psi = \left[\begin{matrix} \Psi_{11} & \Psi_{12} \\ \Psi_{12}' & \Psi_{22}
		\end{matrix}\right] \le 0.
	\]
Minimizing $\cL$ over $S$ leads to the constraint
	\begin{equation}\nonumber
		\Psi_{12} = I/t_f,
	\end{equation}
and over $\Sigma_f$ yields as minimizer
	\begin{equation}\label{eq:Sigmaf}
		\Sigma_f = \frac{\sigma^2}{4} \left(\frac{1}{2}Q_f+\frac{I}{t_f}+\Psi_{22}\right)^{-1}.
	\end{equation}
Therefore, we obtain the dual problem
	\begin{subequations}\label{eq:dual}
	\begin{eqnarray}
		&&\max_{\Psi}~ \tr(\Sigma_0 \Psi_{11})+\frac{\sigma^2}{4} \log\det(\frac{1}{2}Q_f+\frac{I}{t_f}+\Psi_{22})
		\\&& \mbox{subject to} \quad \left[\begin{matrix}
		\Psi_{11} & \frac{I}{t_f} \\ \frac{I}{t_f} & \Psi_{22}
		\end{matrix}\right]\le 0.
	\end{eqnarray}
	\end{subequations}
In the above, for fixed $\Psi_{22}$, the minimizer over $\Psi_{11}$ is clearly $\Psi_{11} = \frac{1}{t_f^2} \Psi_{22}^{-1}$. Thus, \eqref{eq:dual} is equivalent to the convex optimization problem
	\begin{equation}\label{eq:Psi22}
		\max_{\Psi_{22}\le 0}~ \tr(\frac{1}{t_f^2}\Sigma_0\Psi_{22}^{-1})+\frac{\sigma^2}{4}\log\det (\frac{1}{2}Q_f+\frac{I}{t_f}
		+\Psi_{22}).
	\end{equation}
Its first order optimality condition is
	\[
		-\frac{1}{t_f^2}\Psi_{22}^{-1}\Sigma_0\Psi_{22}^{-1}+\frac{\sigma^2}{4}(\frac{1}{2}Q_f+\frac{I}{t_f}+\Psi_{22})^{-1}=0,
	\]
or equivalently
	\[
		-\frac{\sigma^2}{4}t_f^2\Psi_{22}\Sigma_0^{-1}\Psi_{22}+\Psi_{22}+\frac{1}{2}Q_f+\frac{I}{t_f}=0.
	\]
Let $X=\Sigma_0^{-1/2}\Psi_{22}\Sigma_0^{-1/2}$, then
	\[
		\frac{\sigma^2}{4}X^2 -X = Y:= \Sigma_0^{-1/2}(\frac{1}{2}Q_f+\frac{I}{t_f})\Sigma_0^{-1/2}.
	\]
It follows that
	\[
		(\frac{\sigma}{2}X-\frac{I}{\sigma t_f})^2 = Y+\frac{I}{\sigma^2 t_f^2}.
	\]
If we pick the solution
	\begin{equation}\label{eq:X}
		\frac{\sigma}{2}X =\frac{I}{\sigma t_f} -(Y+\frac{I}{\sigma^2 t_f^2})^{1/2}\le 0,
	\end{equation}
then $\Psi_{22}=\Sigma_0^{1/2}X\Sigma_0^{1/2}$ satisfies the constraint $\Psi_{22}\le 0$. Thus, in view of the strong convexity of \eqref{eq:Psi22}, we
conclude that $\Psi_{22}=\Sigma_0^{1/2}X\Sigma_0^{1/2}$ with $X$ in \eqref{eq:X} is the unique solution to \eqref{eq:Psi22}. The optimal $S, \Sigma_f$ follow from \eqref{eq:S} and \eqref{eq:Sigmaf}. This completes the proof.
\end{proof}

We establish similar results when the centers of the potentials are nonzero. Denote by $p_0$, $p_f$ the centers of the initial and target potentials. We seek an optimal control for the following problem.
\begin{problem}\label{pro:covnewnonzero}
Find $Q(\cdot)$ from $Q(0)=Q_0$ to $Q(t_f)=Q_f$ and $p(\cdot)$ from $p(0)=p_0$ to $p(t_f)=p_f$ over time $[0,t_f]$ that minimize the total work subject to constraint \eqref{eq:Qp} as well as boundary condition $\Sigma(0)=\Sigma_0, m(0) = m_0$.
\end{problem}

The idea is the same as in the zero-mean case. Straight forward calculation gives
	\[
		\mmF(\rho_f;\mH_f) = \frac{1}{2}\tr(Q_f\Sigma_f)-\frac{\sigma^2}{4}\log\det\Sigma_f+\frac12(m_f-p_f)'Q_f(m_f-p_f).
	\]
This together with \eqref{eq:OMTSDP} and \eqref{eq:OMTrelation} points to the convex optimization formulation
	\begin{subequations}\label{eq:relaxnonzero}
	\begin{eqnarray}
	&&\min_{S,\Sigma_f,m_f}~ \frac{1}{2}\tr(Q_f\Sigma_f)-\frac{\sigma^2}{4}\log\det\Sigma_f+\frac{1}{t_f}\tr(\Sigma_f-2S),
	\\
	&&\hspace{1.3cm}+\frac12(m_f-p_f)'Q_f(m_f-p_f)+\frac{1}{t_f}\|m_0-m_f\|^2 \nonumber
	\\
	&&\mbox{subject to} \quad \left[\begin{matrix}
		\Sigma_0 & S \\ S' & \Sigma_f
		\end{matrix}\right]\ge 0.
	\end{eqnarray}
	\end{subequations}
We note that in \eqref{eq:relaxnonzero} the minimization over $S,\Sigma_f$ and that over $m_f$ are decoupled. Thus, we have two independent optimization problems. The solution to the former is given by Theorem \ref{thm:solrelax} and the solution to the latter is given in closed-form as
	\begin{equation}\label{eq:mf}
		m_f = (Q_f+\frac{2}{t_f})^{-1}(Q_f p_f + \frac{2}{t_f}m_0).
	\end{equation}
Next we use the above results to recover two scalar cases that have been solved in Schmiedl and Seifert \cite{SchSei07}. 
\subsection{Case study I: Moving laser trap}
Suppose the strength $Q$ of the potential $\mH$ is fixed to be $Q\equiv 1$. Our goal is to choose proper function $p(\cdot)$ from $p(0) = p_0=0$ to $p(t_f) = p_f$ such that the work is minimized. The initial state is assumed to be at equilibrium, i.e., $m_0 = 0, \Sigma_0 = \sigma^2/2$.

When $Q_0=Q_f$ and $\Sigma_0=\frac{\sigma^2}{2}\Sigma_0^{-1}$, it can be easily seen from Theorem \ref{thm:solrelax} that the optimal strategy is $Q(t) \equiv Q_0$. Thus the assumption $Q\equiv 1$ is consistent with the optimality. Regarding the centers $p(t)$, from \eqref{eq:mf} we obtain
	\[
		m_f = \frac{t_f p_f}{2+t_f}.
	\]
Substituting it back to Theorem \ref{thm:optQp} we obtain that the optimal $p(t)$ is
	\[
		p(t) = \frac{t}{t_f}m_f +\frac{1}{t_f}m_f =\frac{(1+t) p_f}{2+t_f},
	\]
and the corresponding work is
	\[
		\mmW_{\rm min} = \frac{1}{t_f}m_f^2+\frac12(m_f-p_f)^2=\frac{p_f^2}{2+t_f}.
	\]
	
\subsection{Case study II: Time-dependent strength of the trap}
The means $p_0, p_f, m_0$ are set to be zero. The task is to drive the linear system \eqref{eq:OU} from an initial ``strength'' $Q_0$ to a terminal $Q_f$. The initial state is assumed to be zero stationary, namely, $\Sigma_0=\frac{\sigma^2}{2}Q_0^{-1}$. Applying Theorem \ref{thm:solrelax} we obtain
	\[
		\Sigma_f = \frac{\sigma^2}{2}\left[\sqrt{Q_fQ_0+2Q_0/t_f+1/t_f^2}-1/t_f\right]^{-2}.
	\]
This together with Theorem \ref{thm:opt} points to the optimal solution
	\begin{eqnarray*}
	Q(t) &=& \frac{Q_0-\Lambda(1+\Lambda t)}{(1+\Lambda t)^2}
	\\ \Sigma(t) &=& \Sigma_0(1+\Lambda t)^2
	\end{eqnarray*}
with
	\begin{eqnarray*}
		\Lambda &=& \frac{1}{t_f} (-1 + \sqrt{\Sigma_f/\Sigma_0})
		\\&=& \frac{\sqrt{Q_fQ_0t_f^2+2Q_0t_f+1}-1-Q_ft_f}{(2+Q_ft_f)t_f}.
	\end{eqnarray*}
	
\section{Connection to JKO gradient flow}\label{sec:JKO}
The result \eqref{eq:WFnon} is closely related to the celebrated Jordan-Kinderlehrer-Otto (JKO) flow \cite{JorKinOtt98}. In fact, we obtain an alternative proof of Theorem \ref{thm:WFnon} for general marginal distributions based on the results in \cite{JorKinOtt98}.

 The JKO scheme gives that the Fokker-Planck equation as the gradient flow of the free energy with respect to the Wasserstein metric $W_2$. Indeed, according to \cite{JorKinOtt98}, the Fokker-Planck equation
	\begin{equation}\label{eq:fokker}
	\frac{\partial \rho}{\partial t} - \nabla\cdot(\nabla \mH(x)\rho)-\frac{\sigma^2}{2}\Delta \rho = 0
	\end{equation}
can be viewed as the gradient flow of the free energy
	\begin{equation}
	\mmF(\rho;\mH) = \mmH(\rho)-TS(\rho) = \int \mH(x)\rho(x) +\frac{\sigma^2}{2}\int \rho\log\rho
	\end{equation}
with respect to the Wasserstein metric $W_2$ on the manifold of probability densities. More specifically, discretizing the above in the time domain, we obtain the celebrated JKO scheme. This amounts to the fact that $\rho^{k+1}(x):= \rho((k+1)h,x)$, where $\rho$ is the solution to \eqref{eq:fokker} and $h$ is the step size, minimizes
	\begin{equation}\label{eq:JKO}
	\frac{1}{2h} W_2(\rho,\rho^k)^2 + \mmF(\rho;\mH)-\mmF(\rho^k;\mH)
	\end{equation}
over $\rho$ as $h$ goes to $0$. This is akin to our result \eqref{eq:WFnon}. Next we discuss the connection between the two.

Since $\rho^{k+1}$ minimizes \eqref{eq:JKO}, we have
	\begin{equation}\label{eq:approx}
	\frac{1}{2h} W_2(\rho^{k+1},\rho^k)^2 + \mmF(\rho^{k+1};\mH)-\mmF(\rho^k;\mH) = -\frac{1}{2h} W_2(\rho^{k+1},\rho^k)^2 + o(h).
	\end{equation}
This is the Wasserstein counterpart of
	\[
		\frac{1}{2h} \|x-x_0\|^2 +f(x)-f(x_0) = -\frac{1}{2h} \|x-x_0\|^2 +o(h)
	\]
when $x$ minimizes the left-hand side (LHS), which follows from the approximation
	\[
		{\rm LHS} \approx \frac{1}{2h} \|x-x_0\|^2 +\nabla f(x_0)\cdot(x-x_0).
	\]
Rearranging \eqref{eq:approx} leads to
	\begin{equation}
	\frac{1}{h} W_2(\rho^{k+1},\rho^k)^2=  \mmF(\rho^k;\mH)-\mmF(\rho^{k+1};\mH) +o(h).
	\end{equation}
Now summing up the above we obtain
	\begin{equation}
	\frac{1}{h} \sum_{k=0}^{N-1} W_2(\rho^{k+1},\rho^k)^2\approx  \mmF(\rho_0)-\mmF(\rho_f)
	\end{equation}
where $N=t_f/h$ is the number of steps. Applying both Cauchy-Schwarz and the triangular inequality yields
	\begin{eqnarray}
	\nonumber
	\mmF(\rho_0;\mH)-\mmF(\rho_f;\mH)  &\approx&  \frac{1}{t_f}{N} \sum_{k=0}^{N-1} W_2(\rho^{k+1},\rho^k)^2
	\\
	&\approx&\frac{1}{t_f} \left(\sum_{k=0}^{N-1} W_2(\rho^{k+1},\rho^k)\right)^2\nonumber
	\\
	&\ge& \frac{1}{t_f} W_2(\rho_0,\rho_f)^2.\label{eq:Cauchy}
	\end{eqnarray}
Finally, by letting $h$ goes to $0$ we establish
	\[
		\frac{1}{t_f} W_2(\rho_0,\rho_f)^2 +\mmF(\rho_f;\mH)-\mmF(\rho_0;\mH) \le 0,
	\]
which is a special case of \eqref{eq:WFnon} when $\mmW=0$. Indeed, when the potential $\mH$ is time-invariant, there is no work being done.

When $\mH$ is time-varying, the analysis is similar. The Fokker-Planck equation is
	\begin{equation}\label{eq:fokkervarying}
	\frac{\partial \rho}{\partial t} - \nabla\cdot(\nabla \mH_t(x)\rho)-\frac{\sigma^2}{2}\Delta \rho = 0
	\end{equation}
and the approximation \eqref{eq:approx} becomes
	\begin{align*}
	&\frac{1}{2h} W_2(\rho^{k+1},\rho^k)^2 + \mmF(\rho^{k+1};\mH^{k+1})-\mmF(\rho^k;\mH^k)\\&= -\frac{1}{2h} W_2(\rho^{k+1},\rho^k)^2 +
	h\int\frac{\partial \mH_t}{\partial t} \rho^k dx+ o(h),
	\end{align*}
where $\mH^k:=\mH_{kh}$.
Following the same steps as in the time-invariant case, we conclude
	\begin{equation}\label{eq:WF}
		\frac{1}{t_f} W_2(\rho_0,\rho_f)^2 +\mmF(\rho_f;\mH_f)-\mmF(\rho_0;\mH_0) \le \int_0^{t_f}\int\frac{\partial \mH_t}{\partial t} \rho dxdt =
		\mmW.
	\end{equation}
	
Thus, the amount of work $\mmW$ needed is always lower-bounded by the change of free energy plus the optimal transport cost between the marginal state distributions $\rho_0, \rho_f$. Moreover, in view of \eqref{eq:Cauchy}, the equality holds when the density flow $\rho_t$ is the displacement interpolation between $\rho_0$ and $\rho_f$. The optimal Hamiltonian is
	\[
		\mH_t = -\phi_t-\frac{\sigma^2}{2}\log \rho,
	\]
where $\phi_t$ is the unique solution (under the constraint that $\int \phi_t = 0$) to
	\[
		\frac{\partial \rho_t}{\partial t} +\nabla\cdot(\rho_t \nabla \phi_t) = 0,
	\]
which, by standard optimal mass transport theory \cite{BenBre00}, always exists.

Therefore, we conclude that {\em Theorem \ref{thm:WFnon} holds for general marginal distributions $\rho_0, \rho_f$} providing we are free to change the Hamiltonian in whatever way we like.
The above analysis also reveals that, for general time-varying Hamiltonian, we can calculate the work using the solution $\rho_t$ to \eqref{eq:fokkervarying}.

\begin{thm}\label{thm:diss2}
	\begin{subequations}\label{eq:Wdiss}
	\begin{eqnarray}
	\mmW-\mmF(\rho_f;\mH_f)+\mmF(\rho_0;\mH_0) &=& \int_0^{t_f} \int \rho_t(x) \|\nabla\phi_t(x)\|^2 dx dt
	\\&&
	\frac{\partial \rho_t}{\partial t} +\nabla\cdot(\rho_t \nabla \phi_t) = 0.
	\end{eqnarray}
	\end{subequations}
\end{thm}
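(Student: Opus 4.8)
The plan is to differentiate the free energy along the density flow $\rho_t$ and to recognize the work rate as one of the resulting terms. Introduce the shorthand $\psi_t := \mH_t + \frac{\sigma^2}{2}\log\rho_t$, which is, up to an additive constant, the first variation of the free energy functional $\mmF(\cdot\,;\mH_t)$. The crucial preliminary observation is that the time-varying Fokker-Planck equation \eqref{eq:fokkervarying} can be recast in divergence form as $\partial_t\rho_t = \nabla\cdot(\rho_t\nabla\psi_t)$, since $\frac{\sigma^2}{2}\Delta\rho_t = \nabla\cdot(\rho_t\nabla(\frac{\sigma^2}{2}\log\rho_t))$.

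First I would compute the time derivative of $\mmF(\rho_t;\mH_t)$, using conservation of mass to discard the constant coming from $\log\rho_t+1$, obtaining
\[
\frac{d}{dt}\mmF(\rho_t;\mH_t) = \int\frac{\partial\mH_t}{\partial t}\,\rho_t\,dx + \int\psi_t\,\partial_t\rho_t\,dx.
\]
By the expression for the work established in \eqref{eq:WF}, the first term integrates over $[0,t_f]$ to $\mmW$. Hence, after integrating in $t$ and rearranging,
\[
\mmW - \mmF(\rho_f;\mH_f)+\mmF(\rho_0;\mH_0) = -\int_0^{t_f}\!\!\int\psi_t\,\partial_t\rho_t\,dx\,dt.
\]

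Next I would identify the velocity field. Comparing the divergence form $\partial_t\rho_t = \nabla\cdot(\rho_t\nabla\psi_t)$ with the defining continuity equation $\partial_t\rho_t + \nabla\cdot(\rho_t\nabla\phi_t)=0$ gives $\nabla\cdot(\rho_t\nabla(\psi_t+\phi_t))=0$; testing this against $\psi_t+\phi_t$ and integrating by parts forces $\nabla\phi_t = -\nabla\psi_t$ in $L^2(\rho_t)$. This is exactly where the uniqueness of the mean-zero solution $\phi_t$, guaranteed by \cite{BenBre00}, enters. Substituting $\partial_t\rho_t = \nabla\cdot(\rho_t\nabla\psi_t)$ into the right-hand side above and integrating by parts once more yields
\[
-\int\psi_t\,\partial_t\rho_t\,dx = \int\rho_t\|\nabla\psi_t\|^2\,dx = \int\rho_t\|\nabla\phi_t\|^2\,dx,
\]
which delivers the claimed identity after integrating in $t$.

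The main obstacle is analytic rather than algebraic: justifying the two integrations by parts requires the boundary terms at infinity to vanish, which follows from sufficient decay of $\rho_t$ together with $\nabla\psi_t$, and differentiating $\mmF(\rho_t;\mH_t)$ under the integral sign needs the flow to be regular enough to permit it. For the Gaussian/quadratic setting that occupies the rest of the paper these conditions are immediate, and in the general case they hold under the standard hypotheses of \cite{JorKinOtt98,BenBre00}. I would state these regularity assumptions explicitly and otherwise treat the manipulations as formal, since the clean cancellation $\nabla\phi_t=-\nabla\psi_t$ is what carries the entire argument.
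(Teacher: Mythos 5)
Your proposal is correct at the formal level on which the paper itself operates, but it takes a genuinely different route. The paper obtains Theorem~\ref{thm:diss2} as a byproduct of its JKO-based analysis: it discretizes time, invokes the Jordan--Kinderlehrer--Otto variational characterization of each Fokker--Planck step, writes each free-energy increment as $-\frac{1}{h}W_2(\rho^{k+1},\rho^k)^2$ plus the work increment up to $o(h)$ errors, sums over steps, and identifies the limit of $\frac{1}{h}\sum_k W_2(\rho^{k+1},\rho^k)^2$ with the integrated squared Wasserstein metric speed $\int_0^{t_f}\int\rho_t\|\nabla\phi_t\|^2\,dx\,dt$; the exact identity of the theorem is then asserted from ``the above analysis'' rather than derived by a separate computation. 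You instead work entirely in continuous time: differentiate $\mmF(\rho_t;\mH_t)$ along the flow, recast the Fokker--Planck equation in divergence form $\partial_t\rho_t=\nabla\cdot(\rho_t\nabla\psi_t)$ with $\psi_t=\mH_t+\frac{\sigma^2}{2}\log\rho_t$, identify $\nabla\phi_t=-\nabla\psi_t$ in $L^2(\rho_t)$ through the uniqueness of the continuity-equation velocity potential, and integrate by parts --- the classical entropy-production (Otto-calculus) computation. What each buys: your argument yields the identity directly and exactly, with no $o(h)$ bookkeeping and no Cauchy--Schwarz step, and it makes the required regularity and decay hypotheses explicit; the paper's route is less direct for this particular statement but simultaneously produces the inequality \eqref{eq:WF} (with equality precisely for displacement interpolation) and foregrounds the gradient-flow interpretation that motivates the whole section. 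One point worth stating explicitly in your write-up: $-\psi_t$ need not satisfy the normalization $\int\phi_t=0$ imposed in the theorem, but this is immaterial since only $\nabla\phi_t$ enters both the continuity equation and the dissipation integral, and your testing argument pins down exactly that gradient.
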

We note that this is a generalization to Theorem \ref{thm:diss} to general marginal distributions.

\begin{remark}
Finally, we note the counterpart of the above result for gradient dynamics
	\[
		\dot x(t) = -\nabla f(t,x(t))
	\]
in a Euclidean space. In general,
	\[
		\frac{1}{t_f} \|x(t_f)-x(0)\|^2 + f(t_f,x(t_f))-f(0,x(0)) \le \int_0^{t_f} \frac{\partial f}{\partial t} (t,x(t))dt.
	\]
In the special case when $f$ is independent of time,
	\[
		\frac{1}{t_f} \|x(t_f)-x(0)\|^2 + f(x(t_f))-f(x(0)) \le 0.
\]
\end{remark}

\section{The Jarzynski equality}\label{sec:jarzynski}

Different from the above results, the Jarzynski equality \cite{Jar97a,Jar97b} provides an alternative way to compare work and free energy. It reads,
	\begin{equation}\label{eq:Jarzynski}
	\mE \{\exp (-\beta \mW(Q,x))\} = \exp (-\beta\Delta \mmF),
	\end{equation}
under the assumption that the initial state is at equilibrium. Here $\Delta \mmF$ is the difference of free energy at equilibrium, viz.,
	\[
		\Delta \mmF = \mmF(\rho_B;\mH_f) - \mmF(\rho_B;\mH_f).
	\]
Note that the Jarzynski equality implies $\mmW=\mE\{\mW\} \ge \Delta \mmF$ by Jensen's inequality. Indeed,
	\[
		\exp(-\beta \mmW) \le \mE \{\exp (-\beta \mW(Q,x))\} = \exp (-\beta\Delta \mmF).
	\]
The relation $\mmW \ge \Delta \mmF$ then follows from the monotonicity of the exponential function.
%{\rike not interesting
%Interestingly, even though \eqref{eq:Jarzynski} holds for any processes, reversible or non-reversible, %irrespective of the length of time spent,
%it doesn't help deciding on the exact value of the difference between $\mmW$ and $\Delta \mmF$.
%}
%Naturally, the longer time we have, the smaller the difference between $W$ and $\Delta F$ can be, as in general slow processes are closer to be reversible.

%{\rike Useless remark. \eqref{eq:WFnon} only holds for optimal path
%\begin{remark}
%One can combine the Jarzynski equality and equation~(\ref{eq:WFnon}) to get the following intriguing relationship:
%\[ \mE \{\exp (-\beta \mW(Q,x)\} = \exp (-\beta \mmW) \exp(\frac{\beta W_2(\rho_0, \rho_{f})^2}{t_f}) .\]
%\end{remark}
%}

We next recall a simple derivation of the Jarzynski equation. Let $\rho(t,\cdot)$ be the density of $x(t)$, and let
\[
g(t,y)=\mE\{\exp(-\beta\int_0^t \frac{\partial \mH_s}{\partial s} ds)\,\mid \, x(t)=y\}.
\]
We have that
	\begin{equation}\label{eq:g}
		\frac{\partial g}{\partial t}+(-\nabla \mH_t-\sigma^2\nabla\log \rho)\cdot\nabla g -\frac{\sigma^2}{2}\Delta g
		+\beta\frac{\partial \mH_t}{\partial t}g = 0,~~g(0,\cdot) \equiv 1.
	\end{equation}
To see this, first rewrite \eqref{eq:OU} in the reverse direction utilizing Doob's $h$-transform,
	\begin{equation}
		dx(t) = (-\nabla \mH_t -\sigma^2 \nabla\log \rho)dt + \sigma dw_-.
	\end{equation}
This is quite standard. Here, $dw_-$ is a reverse Wiener process, whose current increment is independent with the future. We deduce that
	\begin{align*}
		g(t,x) &= \mE \left\{\exp(-\beta \int_0^t \frac{\partial \mH_s}{\partial s}(x(s))ds)~\mid~ x(t)=x\right\}
		\\
		&\hspace*{-.3in}\approx \mE \left\{ \mE\left[\exp(-\beta \int_0^{t-dt} \frac{\partial \mH_s}{\partial s}(x(s))ds)
		~\mid~x(t-dt) =x+dx\right]\right.\\
		&\left.\phantom{\int_0^{t-dt}}
		\times(1-\beta\frac{\partial \mH_t}{\partial t}(x)dt)~\mid x(t)=x\right\}
		\\
		&= \mE \left\{ g(t-dt,x+dx)(1-\beta\frac{\partial \mH_t}{\partial t}(x)dt)~\mid x(t)=x\right\}
		\\
		&\approx \mE \left\{ g(t,x)-\frac{\partial g}{\partial t} dt-\nabla g\cdot
		(-\nabla \mH_t -\sigma^2 \nabla\log \rho)dt\right.\\
		&\left.\phantom{\frac{\partial g}{\partial t}}+\sigma \nabla g\cdot dw_-
		+\frac{\sigma^2}{2} \Delta gdt
		-\beta g\frac{\partial \mH_t}{\partial t}(x)dt~\mid x(t)=x\right\}
		\\
		&= g(t,x)-\frac{\partial g}{\partial t} dt-\nabla g\cdot
		(-\nabla \mH_t -\sigma^2 \nabla\log \rho)dt\\&\phantom{\frac{\partial g}{\partial t}}+\frac{\sigma^2}{2} \Delta gdt
		-\beta g\frac{\partial \mH_t}{\partial t}(x)dt,
	\end{align*}
from which \eqref{eq:g} follows. Combining \eqref{eq:g} and the Fokker-Planck equation
	\[
		\frac{\partial \rho} {\partial t} +\nabla\cdot(-\nabla \mH_t \rho)-\frac{\sigma^2}{2}\Delta \rho = 0,
	\]
we establish that $f:=g\rho$ satisfies
	\begin{eqnarray*}
	\frac{\partial f}{\partial t} &=& \frac{\partial g}{\partial t} \rho + g\frac{\partial \rho}{\partial t}
	\\
	&=& \nabla\cdot(\nabla \mH_t f)+\frac{\sigma^2}{2} \Delta f - \beta \frac{\partial \mH_t}{\partial t} f.
	\end{eqnarray*}

Finally, we claim that
	\[
		f(t,x) = \frac{1}{Z_0}\exp (-\beta \mH_t(x))
	\]
with $Z_0 = \int \exp (-\beta \mH_0(x))dx$.
To see this, we just need to notice
	\[
		\frac{\partial f}{\partial t} = - \beta \frac{\partial \mH_t}{\partial t} f
	\]
as well as
	\[
		\nabla\cdot(\nabla \mH_t f)+\frac{\sigma^2}{2} \Delta f =0.
	\]
Clearly the boundary condition $g(0,\cdot) \equiv 1$ also holds as
	\[
		f(0,x) = \frac{1}{Z_0}\exp (-\beta \mH_0(x)) = \rho_0.
	\]
	
\section{Numerical examples}\label{sec:eg}
%We present several examples to illustrate our framework. 
Consider a laser trap in one dimensional space whose strength and center location can both vary over time. The initial Hamiltonian $\mH_0$ and terminal Hamiltonian $\mH_f$ are set to be quadratic with parameters
	\begin{eqnarray*}
	Q_0 &=& 1,~~p_0 = 0.3,
	\\
	Q_f  &=& 4,~~p_f = -1,
	\end{eqnarray*}
respectively. The initial state distribution is assumed to be stationary, that is, 
	\[
		\Sigma_0 = \frac{\sigma^2}{2}Q_0^{-1}=\frac{\sigma^2}{2},~~m_0=p_0=0.3.
	\]
We further choose $\sigma$ to be $1$ and $f$ to be $1$ to simplify the calculation. Our goal is to drive the state distribution to the stationary distribution corresponding to $\mH_1$, which is 
	\[
		\Sigma_1 = \frac{1}{2}Q_1^{-1} = \frac{1}{8},~~m_1 = p_1 = -1,
	\]
via adjusting the Hamiltonian $\mH_t$. The optimal strategy is given in Section \ref{sec:regulation} if we want to achieve this distribution at $t=1$. Plugging the above parameters into Theorem \ref{thm:opt}, we get the optimal strategy and density flow being 
	\[
		Q(t) = \frac{6-t}{(t-2)^2},
	\]
	\[
		p(t) = 0.3-1.3t-\frac{1.3(t-2)^2}{6-t},
	\]
and 
	\[
		\Sigma(t) = \frac{(t-2)^2}{8}
	\]
respectively for all $t\in (0, 1)$. The minimum work is $2.162$. The Hamiltonian jumps from $(Q(1)=5,\, p(1)=-1.26)$ to $(Q_1,\,p_1)$ at the terminal time point $t=1$, after which, both the Hamiltonian and state density remain time-invariant. Figure \ref{fig:rhointerp1} depicts the evolution of the probability density of the state. Several typical sample paths are plotted in Figure \ref{fig:path1}. Clearly the sample paths are consistent with the density flow.

\begin{figure}[h]
\centering
\includegraphics[width=0.5\textwidth]{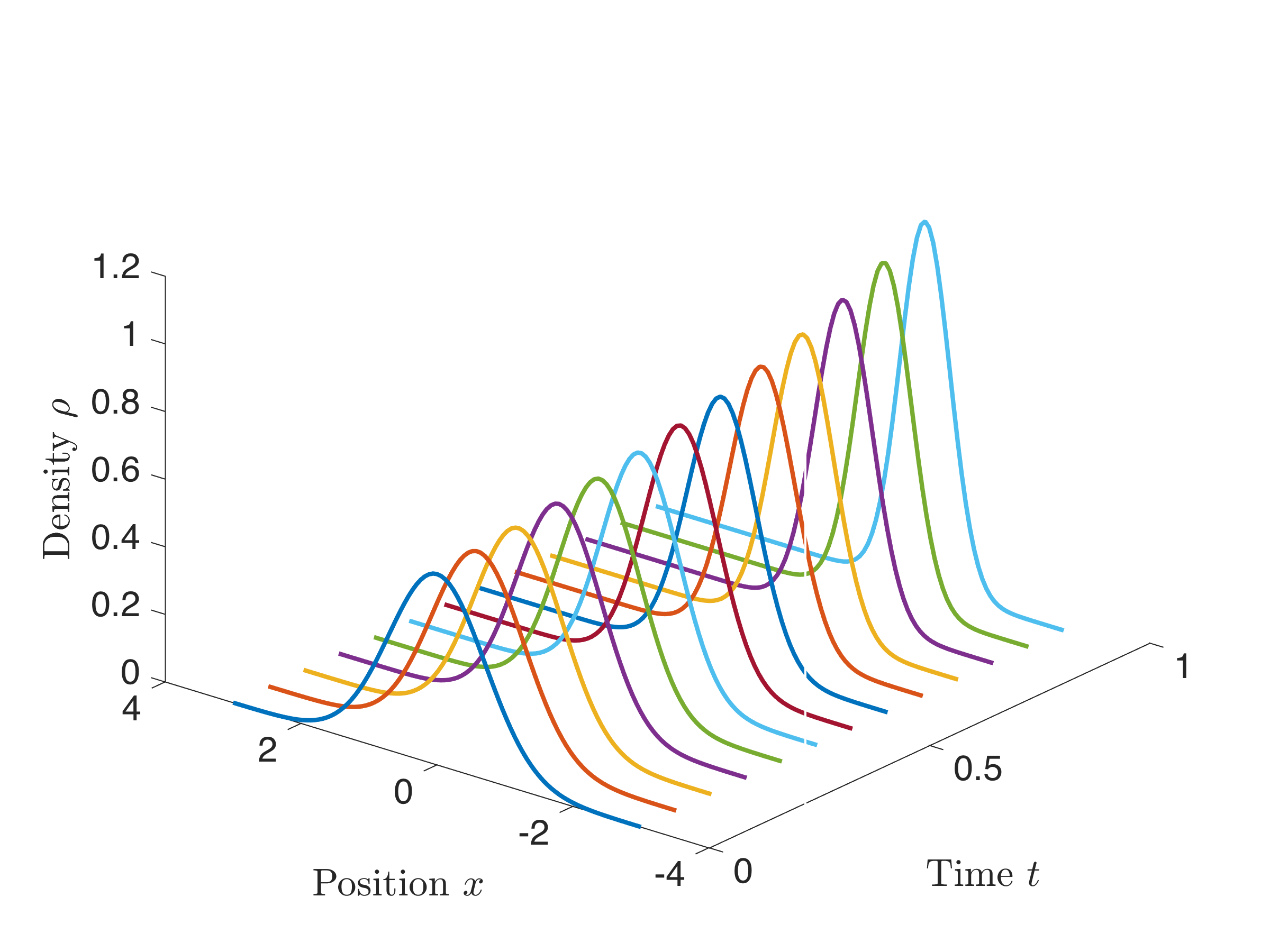}
\caption{Density evolution}
\label{fig:rhointerp1}
\end{figure}
\begin{figure}[h]
\centering
\includegraphics[width=0.5\textwidth]{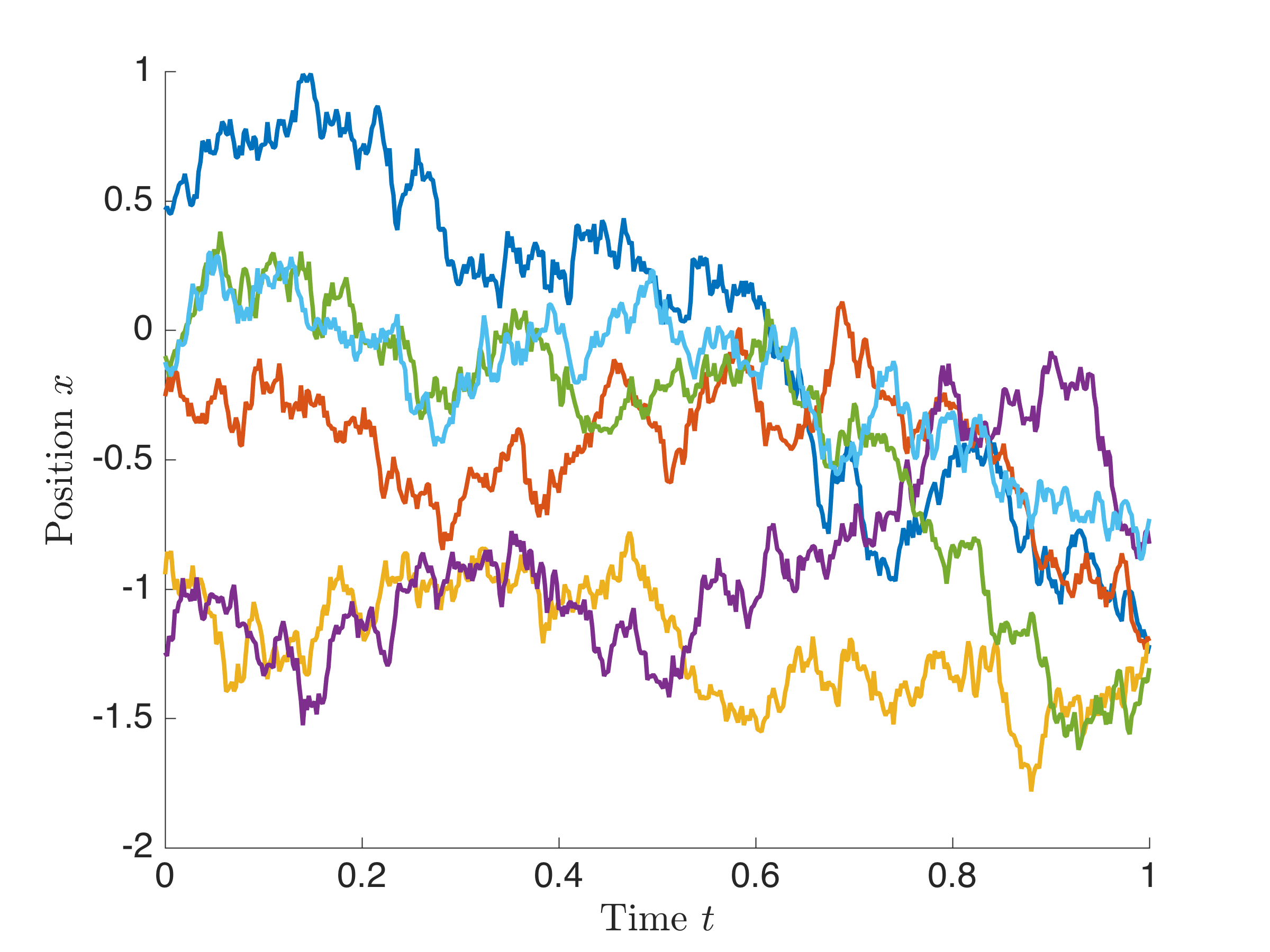}
\caption{Sample paths over $t\in [0, 1]$}
\label{fig:path1}
\end{figure}

Next we move to another scenario discussed in Section \ref{sec:relax} where the constraint on the terminal state distribution doesn't exist. Following the discussion in Section \ref{sec:relax}, we obtain the optimal terminal distribution at $t=1$ to be 
	\[
		\Sigma_1 = 0.3273, ~~m_1 = -0.35.
	\]
The corresponding optimal strategy and density flow can be again obtained using Theorem \ref{thm:opt}.
%	\[
%		Q(t) = 
%	\]
%and 
%	\[
%		\Sigma(t) = 
%	\]
The minimum work is $0.9692$, which is less than $2.162$ in the previous setting. In this case, the terminal distribution is not stationary with respect to $\mH_1$ anymore, therefore, the state density will vary after the terminal time $t=1$. Eventually, due to fluctuation, the state density will converge to the stationary distribution $\cN(-1,\, 1/8)$. In Figure \ref{fig:rhointerp2}, we can see clearly that the evolution of the state distribution doesn't match the terminal condition $\cN(-1,\, 1/8)$. This can also be seen from the sample paths in Figure~\ref{fig:path2}. However, if we run the system long enough, then the state distribution will converge to the stationary one, as shown in Figure~\ref{fig:path3}, due to fluctuation.
\begin{figure}[h]
\centering
\includegraphics[width=0.5\textwidth]{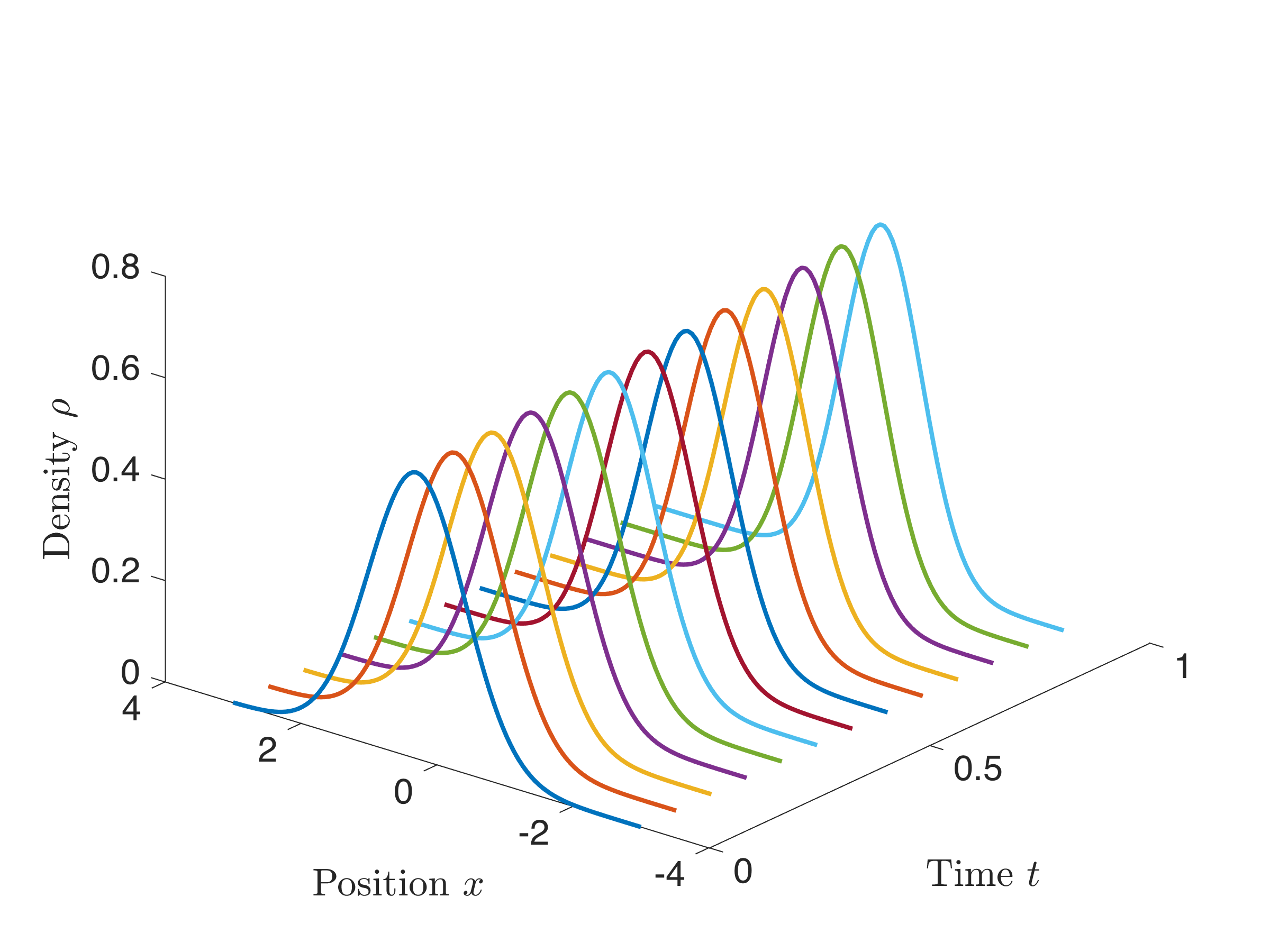}
\caption{Density evolution}
\label{fig:rhointerp2}
\end{figure}
\begin{figure}[h]
\centering
\includegraphics[width=0.5\textwidth]{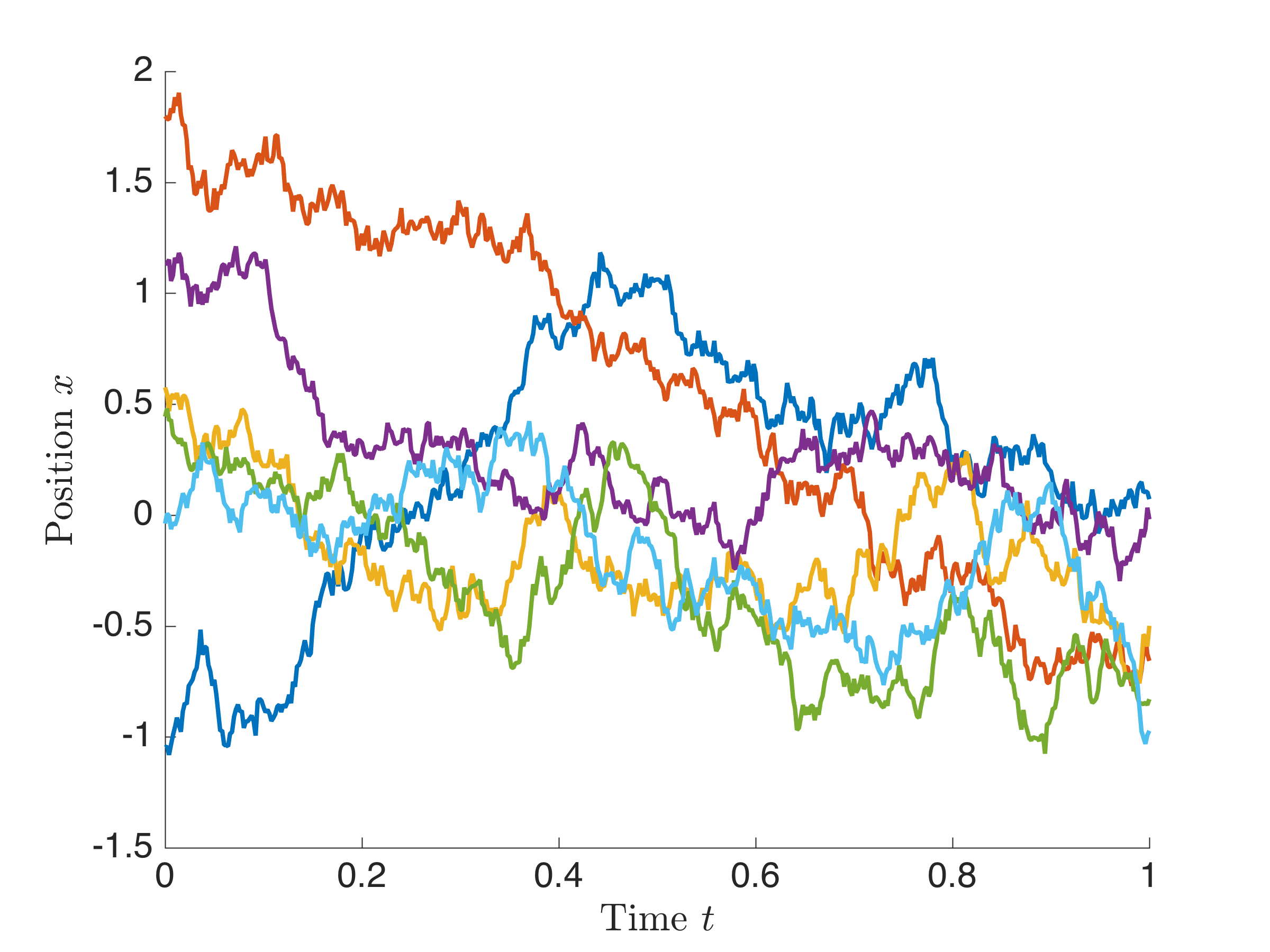}
\caption{Sample paths over $t\in [0, 1]$}
\label{fig:path2}
\end{figure}
\begin{figure}[h]
\centering
\includegraphics[width=0.5\textwidth]{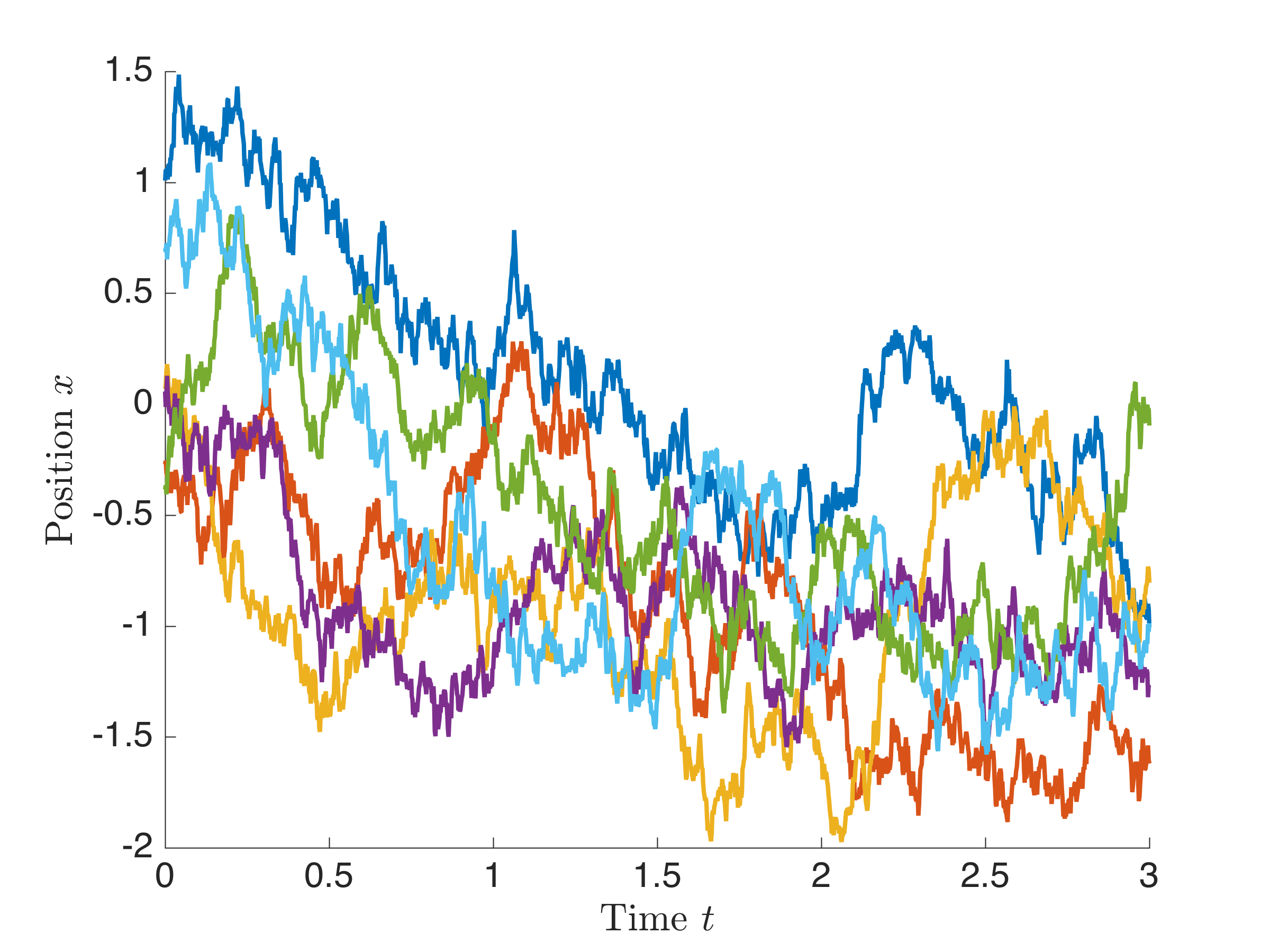}
\caption{Sample paths over $t\in [0, 3]$}
\label{fig:path3}
\end{figure}

\section{Conclusion}
We described the problem of controlling non-equilibrium thermodynamical systems with harmonic Hamiltonian, from a given initial state to a final state, by adjusting the parameter specifying the Hamiltonian in finite time. This led to some interesting connections to optimal mass transport theory, and gave a new twist to understanding the second law of thermodynamics building on the seminal results of Jarzynski and the recent advances in stochastic thermodynamics, e.g., see \cite{Sei12}.
%\cite{Jar97a,Jar97b,Cro99,carberry2004fluctuations,seifert2005entropy,SchSei07,jarzynski2007comparison,kawai2007dissipation,sekimoto2010stochastic,jarzynski2011equalities,aurell2012refined,Sei12}.
%In future work, we plan to extend this theory beyond the quadratic case as well as extend the framework to other specific classes of distributions, e.g., Gaussian mixtures.
We expect that the theory will lead to insights in the case of potentials with several wells and to connections with the theory of large deviations and information theory with optimal mass transport \cite{ParHorSag15,Leo12,Leo14}. We anticipate interesting connections to the celebrated Landauer principle \cite{Lan61}, which provides a fundamental lower bound of the energy consumption to erase one bit of information. In recent years, many experiments have been performed aiming to achieve the bound $k_B T\ln 2$ \cite{BerAraPetCil12, TalBhaSal17,talukdar2016energetics}.
This bound, however, theoretically can only be achieved through reversible processes. The constraint to erase a bit in finite time, unavoidably, introduces a gap. Our aim is to gain insights into such a gap using optimal transport theory and stochastic control.

\section*{Acknowledgements}
This project was supported by AFOSR grants (FA9550-15-1-0045 and FA9550-17-1-0435), ARO grant (W911NF-17-1-049), grants from the National Center for Research Resources (P41-RR-013218) and the National Institute of Biomedical Imaging and Bioengineering (P41-EB-015902), NCI grant (1U24CA18092401A1), NIA grant (R01 AG053991), and a grant from the Breast Cancer Research Foundation.

%\appendix{Basic thermodynamics}

%In this section, we briefly review very basic classical thermodynamics for completeness in our exposition. The material here is standard and may be found in any standard text %\cite{Owen84}.

\bibliographystyle{IEEEtran}
\bibliography{./refs}

% Generated by IEEEtran.bst, version: 1.14 (2015/08/26)
\begin{thebibliography}{10}
\providecommand{\url}[1]{#1}
\csname url@samestyle\endcsname
\providecommand{\newblock}{\relax}
\providecommand{\bibinfo}[2]{#2}
\providecommand{\BIBentrySTDinterwordspacing}{\spaceskip=0pt\relax}
\providecommand{\BIBentryALTinterwordstretchfactor}{4}
\providecommand{\BIBentryALTinterwordspacing}{\spaceskip=\fontdimen2\font plus
\BIBentryALTinterwordstretchfactor\fontdimen3\font minus
  \fontdimen4\font\relax}
\providecommand{\BIBforeignlanguage}[2]{{%
\expandafter\ifx\csname l@#1\endcsname\relax
\typeout{** WARNING: IEEEtran.bst: No hyphenation pattern has been}%
\typeout{** loaded for the language `#1'. Using the pattern for}%
\typeout{** the default language instead.}%
\else
\language=\csname l@#1\endcsname
\fi
#2}}
\providecommand{\BIBdecl}{\relax}
\BIBdecl

\bibitem{Jar97a}
C.~Jarzynski, ``Nonequilibrium equality for free energy differences,''
  \emph{Physical Review Letters}, vol.~78, no.~14, p. 2690, 1997.

\bibitem{Jar97b}
------, ``Equilibrium free-energy differences from nonequilibrium measurements:
  {A} master-equation approach,'' \emph{Physical Review E}, vol.~56, no.~5, p.
  5018, 1997.

\bibitem{Cro99}
G.~E. Crooks, ``Entropy production fluctuation theorem and the nonequilibrium
  work relation for free energy differences,'' \emph{Physical Review E},
  vol.~60, no.~3, p. 2721, 1999.

\bibitem{carberry2004fluctuations}
D.~Carberry, J.~C. Reid, G.~Wang, E.~M. Sevick, D.~J. Searles, and D.~J. Evans,
  ``Fluctuations and irreversibility: An experimental demonstration of a
  second-law-like theorem using a colloidal particle held in an optical trap,''
  \emph{Physical {R}eview {L}etters}, vol.~92, no.~14, p. 140601, 2004.

\bibitem{seifert2005entropy}
U.~Seifert, ``Entropy production along a stochastic trajectory and an integral
  fluctuation theorem,'' \emph{Physical Review Letters}, vol.~95, no.~4, p.
  040602, 2005.

\bibitem{SchSei07}
T.~Schmiedl and U.~Seifert, ``Optimal finite-time processes in stochastic
  thermodynamics,'' \emph{Physical Review Letters}, vol.~98, no.~2, p. 108301,
  2007.

\bibitem{jarzynski2007comparison}
C.~Jarzynski, ``Comparison of far-from-equilibrium work relations,''
  \emph{Comptes Rendus Physique}, vol.~8, no.~5, pp. 495--506, 2007.

\bibitem{kawai2007dissipation}
R.~Kawai, J.~Parrondo, and C.~Van~den Broeck, ``Dissipation: The phase-space
  perspective,'' \emph{Physical {R}eview {L}etters}, vol.~98, no.~8, p. 080602,
  2007.

\bibitem{sekimoto2010stochastic}
K.~Sekimoto, \emph{Stochastic Energetics}.\hskip 1em plus 0.5em minus
  0.4em\relax Springer, 2010, vol. 799.

\bibitem{jarzynski2011equalities}
C.~Jarzynski, ``Equalities and inequalities: irreversibility and the second law
  of thermodynamics at the nanoscale,'' \emph{Annu. Rev. Condens. Matter
  Phys.}, vol.~2, no.~1, pp. 329--351, 2011.

\bibitem{aurell2012refined}
E.~Aurell, K.~Gawȩdzki, C.~Mej{\'\i}a-Monasterio, R.~Mohayaee, and
  P.~Muratore-Ginanneschi, ``Refined second law of thermodynamics for fast
  random processes,'' \emph{Journal of {S}tatistical {P}hysics}, vol. 147,
  no.~3, pp. 487--505, 2012.

\bibitem{Sei12}
U.~Seifert, ``Stochastic thermodynamics, fluctuation theorems and molecular
  machines,'' \emph{Reports on Progress in Physics}, vol.~75, no.~12, p.
  126001, 2012.

\bibitem{HotSke87}
A.~Hotz and R.~E. Skelton, ``Covariance control theory,'' \emph{International
  Journal of Control}, vol.~46, no.~1, pp. 13--32, 1987.

\bibitem{CheGeoPav14a}
Y.~Chen, T.~T. Georgiou, and M.~Pavon, ``Optimal steering of a linear
  stochastic system to a final probability distribution, {P}art {I},''
  \emph{IEEE Trans.\ on Automatic Control}, vol.~61, no.~5, pp. 1158--1169,
  2016.

\bibitem{CheGeoPav14b}
------, ``Optimal steering of a linear stochastic system to a final probability
  distribution, {P}art {II},'' \emph{IEEE Trans.\ on Automatic Control},
  vol.~61, no.~5, pp. 1170--1180, 2016.

\bibitem{CheGeoPav17c}
------, ``Optimal steering of a linear stochastic system to a final probability
  distribution, {P}art {III},'' \emph{IEEE Trans.\ on Automatic Control, to
  appear}, 2018.

\bibitem{England}
R.~Marsland and J.~England, ``Far-from-equilibrium distribution from
  near-steady-state work fluctuations,'' \emph{Phys. Rev. E}, vol.~92, no.~5,
  2015.

\bibitem{SanGeoTan15}
R.~Sandhu, T.~Georgiou, E.~Reznik, L.~Zhu, I.~Kolesov, Y.~Senbabaoglu, and
  A.~Tannenbaum, ``Graph curvature for differentiating cancer networks,''
  \emph{Scientific {R}eports}, vol.~5, p. 12323, 2015.

\bibitem{Gomez2008}
A.~Gomez-Marin, T.~Schmiedl, and U.~Seifert, ``Optimal protocols for minimal
  work processes in underdamped stochastic thermodynamics,'' \emph{The Journal
  of Chemical Physics}, vol. 129, no.~2, p. 024114, 2008.

\bibitem{AurMejMur11}
E.~Aurell, C.~Mej{\'\i}a-Monasterio, and P.~Muratore-Ginanneschi, ``Optimal
  protocols and optimal transport in stochastic thermodynamics,''
  \emph{Physical {R}eview {L}etters}, vol. 106, no.~25, p. 250601, 2011.

\bibitem{JorKinOtt98}
R.~Jordan, D.~Kinderlehrer, and F.~Otto, ``The variational formulation of the
  {F}okker--{P}lanck equation,'' \emph{SIAM {J}ournal on {M}athematical
  {A}nalysis}, vol.~29, no.~1, pp. 1--17, 1998.

\bibitem{Owen84}
D.~Owen, \emph{A First Course in the Mathematical Foundations of
  Thermodynamics}.\hskip 1em plus 0.5em minus 0.4em\relax Springer, 1984.

\bibitem{ParHorSag15}
J.~M. Parrondo, J.~M. Horowitz, and T.~Sagawa, ``Thermodynamics of
  information,'' \emph{Nature {P}hysics}, vol.~11, no.~2, p. 131, 2015.

\bibitem{Vil03}
C.~Villani, \emph{Topics in {O}ptimal {T}ransportation}.\hskip 1em plus 0.5em
  minus 0.4em\relax American Mathematical Soc., 2003, no.~58.

\bibitem{Otto}
F.~Otto, ``The geometry of dissipative evolution equations: the porous medium
  equation,'' \emph{Communications in Partial Differential Equations}, 2001.

\bibitem{Vil08}
C.~Villani, \emph{Optimal {T}ransport: {O}ld and {N}ew}.\hskip 1em plus 0.5em
  minus 0.4em\relax Springer, 2008, vol. 338.

\bibitem{McC97}
R.~J. McCann, ``A convexity principle for interacting gases,'' \emph{{A}dvances
  in Mathematics}, vol. 128, no.~1, pp. 153--179, 1997.

\bibitem{Tak11}
A.~Takatsu, ``Wasserstein geometry of gaussian measures,'' \emph{Osaka Journal
  of Mathematics}, vol.~48, no.~4, pp. 1005--1026, 2011.

\bibitem{dowson1982frechet}
D.~Dowson and B.~Landau, ``The fr{\'e}chet distance between multivariate normal
  distributions,'' \emph{Journal of {M}ultivariate {A}nalysis}, vol.~12, no.~3,
  pp. 450--455, 1982.

\bibitem{jiang2012geometric}
X.~Jiang, Z.-Q. Luo, and T.~T. Georgiou, ``Geometric methods for spectral
  analysis,'' \emph{IEEE Transactions on Signal Processing}, vol.~60, no.~3,
  pp. 1064--1074, 2012.

\bibitem{bhatia2013matrix}
R.~Bhatia, \emph{Matrix analysis}.\hskip 1em plus 0.5em minus 0.4em\relax
  Springer Science \& Business Media, 2013, vol. 169.

\bibitem{CheGeoPav15b}
Y.~Chen, T.~T. Georgiou, and M.~Pavon, ``Optimal transport over a linear
  dynamical system,'' \emph{IEEE Transactions on Automatic Control}, vol.~62,
  no.~5, pp. 2137--2152, 2017.

\bibitem{CheGeoPav14e}
------, ``On the relation between optimal transport and {S}chr{\"o}dinger
  bridges: A stochastic control viewpoint,'' \emph{Journal of Optimization
  Theory and Applications}, vol. 169, no.~2, pp. 671--691, 2016.

\bibitem{BenBre00}
J.~D. Benamou and Y.~Brenier, ``A computational fluid mechanics solution to the
  {M}onge-{K}antorovich mass transfer problem,'' \emph{Numerische Mathematik},
  vol.~84, no.~3, pp. 375--393, 2000.

\bibitem{Leo12}
C.~L{\'e}onard, ``From the {S}chr{\"o}dinger problem to the
  {M}onge--{K}antorovich problem,'' \emph{Journal of Functional Analysis}, vol.
  262, no.~4, pp. 1879--1920, 2012.

\bibitem{Leo14}
------, ``A survey of the {S}chr{\"o}dinger problem and some of its connections
  with optimal transport,'' \emph{Dicrete Contin. Dyn. Syst. A}, vol.~34,
  no.~4, pp. 1533--1574, 2014.

\bibitem{Lan61}
R.~Landauer, ``Irreversibility and heat generation in the computing process,''
  \emph{IBM {J}ournal of {R}esearch and {D}evelopment}, vol.~5, no.~3, pp.
  183--191, 1961.

\bibitem{BerAraPetCil12}
A.~B{\'e}rut, A.~Arakelyan, A.~Petrosyan, S.~Ciliberto, R.~Dillenschneider, and
  E.~Lutz, ``Experimental verification of landauer's principle linking
  information and thermodynamics,'' \emph{Nature}, vol. 483, no. 7388, p. 187,
  2012.

\bibitem{TalBhaSal17}
S.~Talukdar, S.~Bhaban, and M.~V. Salapaka, ``Memory erasure using
  time-multiplexed potentials,'' \emph{Physical Review E}, vol.~95, no.~6, p.
  062121, 2017.

\bibitem{talukdar2016energetics}
------, ``Energetics of erasing a single bit memory,'' \emph{arXiv preprint
  arXiv:1609.02187}, 2016.

\end{thebibliography}

\end{document}